\newcolumntype{P}[1]{>{\centering\arraybackslash}p{#1}}
\newtheorem{example}{Example}
\newtheorem{theorem}{Theorem}[section]
\newtheorem{definition}{Definition}[section]
\newtheorem{lemma}{Lemma}[section]
\begin{document}
	\begin{center}
		{{\bf \large {\rm {\bf  Fourth order compact scheme for option pricing under Merton and Kou jump-diffusion models}}}}
	\end{center}
	\begin{center}
		{\textmd {{\bf Kuldip Singh Patel,}}}\footnote{\it Department of Mathematics, Indian Institute of Technology, Delhi, India, (kuldip@maths.iitd.ac.in)}
		{\textmd {{\bf Mani Mehra}}}\footnote{\it Department of Mathematics, Indian Institute of Technology, Delhi, India, {(mmehra@maths.iitd.ac.in)} }
	\end{center}
\begin{abstract}
In this article, a three-time levels compact scheme is proposed to solve the partial integro-differential equation governing the option prices under jump-diffusion models. In the proposed compact scheme, the second derivative approximation of unknowns is approximated by the value of unknowns and their first derivative approximations which allow us to obtain a tri-diagonal system of linear equations for the fully discrete problem. Moreover, consistency and stability of the proposed compact scheme are proved. Due to the low regularity of typical initial conditions, the smoothing operator is employed to ensure the fourth-order convergence rate. Numerical illustrations for pricing European options under Merton and Kou jump-diffusion models are presented to validate the theoretical results.
\end{abstract}
\begin{center}
	{\bf Keywords:} Compact schemes; European options; jump-diffusion models; option pricing.
\end{center}
\section{Introduction}\label{sec:intro}
\par F. Black and M. Sholes derived a partial differential equation (PDE) governing the option prices in the stock market assuming that the dynamics of the underlying asset are driven by geometric Brownian motion with constant volatility \cite{BlaS73}. Later, numerous studies found that these assumptions are inconsistent with the market price movements. Various approaches have been considered to overcome the shortcomings of Black-Scholes model. In one of these approaches, Merton extended the Black-Scholes model to incorporate the jumps into the dynamics of the underlying asset in order to determine the volatility skews and it is known as Merton jump-diffusion model \cite{Mer76}. In another approach, the volatility is considered to be a stochastic process and these models are known as stochastic volatility models \cite{Hull87, Hes93}. Apart from these, the volatility is also assumed to be a deterministic function of time and stock price. This so-called deterministic volatility function approach was pioneered in \cite{Dup94}. Bates combined the jump-diffusion model with stochastic volatility approach to capture the typical features of market option prices \cite{Bates96}. Moreover, Anderson and Andreasen combined the deterministic volatility function approach with jump-diffusion model and proposed a second-order accurate numerical method for valuation of options \cite{Ander00}. In contrast to Merton jump-diffusion model where jump sizes follow Gaussian distribution, Kou proposed another jump-diffusion model assuming that jump sizes have double exponential distribution and it is called as Kou jump-diffusion model \cite{Kou02}.
\par The prices of European options under Merton and Kou jump-diffusion models can be evaluated by solving a partial integro-differential equation (PIDE). Various numerical methods have been proposed by several authors to solve the PIDE accurately and efficiently. The viscosity solution of the PIDE is discussed in \cite{Briani04} and an explicit finite difference method is proposed to solve the PIDE with certain stability condition. Cont and Voltchkova proposed implicit-explicit (IMEX) scheme for pricing European and barrier options and proved the stability and convergence of the proposed scheme \cite{Cont05}. d'Halluin et al. proposed a second-order accurate implicit method which uses fast Fourier transform (FFT) for evaluating the convolution integral \cite{Halluin05}. They also proved the stability and the convergence of the fixed-point iteration method. An excellent comparison of various approaches for option pricing under jump-diffusion models is given in \cite{Duffy05}. An IMEX Runge-Kutta method for the time integration is proposed in \cite{BrNaRu07} to solve the integral term of the PIDE explicitly. This method provides high-order accuracy under certain time step restriction. Sachs and Strauss used some transformation technique to eliminate the convection term from the PIDE and proposed a second-order accurate finite difference method for the solution of new PIDE \cite{SaSt08}. A three-time levels second-order accurate implicit method using finite difference approximations is proposed in \cite{Kwon11} for European put options under jump-diffusion models. A family of IMEX time discretization schemes is proposed in \cite{SaTo14} to solve the PIDE under jump-diffusion models. They discussed the stability of the schemes via Fourier analysis and showed that the schemes are conditionally stable. A second-order accurate IMEX time semi-discretization scheme for pricing European and American options under Bates model is discussed in \cite{SaToSy14}. They explicitly treated the jump term using the second-order Adams-Bashforth method and rest of the terms are discretized implicitly using the Crank-Nicolson method. Recently, Kadalbajoo et al. proposed a second-order accurate IMEX schemes along with cubic B-spline collocation method for European option pricing under jump-diffusion models \cite{KadT15}. They discussed the stability, convergence and computational complexity of all schemes.
\par It is observed that the inclusion of more grid points in computation stencil in order to increase the accuracy of finite difference approximations becomes computationally expensive. Therefore, finite difference approximations have been developed using compact stencils (commonly known as compact finite difference approximations) at the expense of some complication in their evaluation. Compact finite difference approximations provide high-order accuracy and better resolution characteristics as compared to finite difference approximations for equal number of grid points \cite{Lele92}. Compact finite difference approximations have also been used for option pricing problems \cite{DurF04, TanGB08, HWSun11}. In particular, Lee and Sun considered the original PIDE as an auxiliary equation to derive a compact scheme for option pricing under jump-diffusion models \cite{HWSun11}. They used IMEX schemes for temporal semi-discretization to avoid the inverse of a dense matrix and Richardson extrapolation is applied to obtain the fourth-order convergence rate. Recently, a fourth-order accurate compact scheme for option pricing under Bates model is derived in \cite{DuPi17} and stability of the compact scheme is observed numerically. Moreover, compact schemes have been extensively studied for compressible flows problems \cite{FasM00} and for computational aeroacoustic problems \cite{TKSen03}. A detailed study about various order compact finite difference approximations is given in \cite{MKPACM17}.
\par In this article, a three-time levels compact scheme is proposed to solve the PIDE under jump-diffusion models. The novelty of the proposed compact scheme is that it does not require the original equation as an auxiliary equation unlike the compact scheme proposed in \cite{HWSun11}. Moreover, consistency and stability of the proposed three-time levels compact scheme are proved. Since initial conditions for jump-diffusion models have low regularity, the smoothing operator given in \cite{thomee70} is employed to smoothen the initial conditions in order to achieve the fourth-order convergence rate. Simpson's rule for numerical integration is used and a Toeplitz-like structure is obtained in order to use FFT for efficient matrix-vector multiplication. Moreover, the CPU times for proposed compact scheme and finite difference scheme are calculated for a given accuracy and it is shown that proposed compact scheme outperforms the finite difference scheme.
\par The rest of the paper is organized as follows. In Section~\ref{sec:conti_prob}, the continuous model problem is discussed. Fourth-order compact
finite difference approximations for first and second derivatives along with a brief discussion on Fourier analysis are discussed in Section~\ref{sec:compact}. In Section~\ref{sec:disc_prob}, three-time levels scheme for temporal semi-discretization and fourth-order compact approximations for spatial discretization are discussed for continuous PIDE. The consistency and stability of proposed three-time levels compact scheme are proved in Section~\ref{sec:analysis}. In Section~\ref{sec:numerical}, numerical examples are presented to validate the theoretical results.
\section{The Continuous Problem}
\label{sec:conti_prob}
In this section, the mathematical model for pricing European options under jump-diffusion models is discussed. First we introduce the L$\acute{e}$vy process in the following definition.
\begin{definition} Let $(\Omega, \mathcal{F}, \mathcal{F}_{t}, \mathbb{P})$ be a probability space with filtration $\mathcal{F}_{t}$. A stochastic process $(X_{t})_{t\geq 0}$ is a L$\acute{e}$vy process on $(\Omega, \mathcal{F}, \mathcal{F}_{t}, \mathbb{P})$ if
	\begin{enumerate}
		\item $X_{0}=0$ almost surely.
		\item It is stochastically continuous i.e. for all $a>0$ and for all $s\geq0$
		\[
		\lim_{t \rightarrow s}P(|X(t)-X(s)|>a)=0.
		\]
		\item For any $s, t \geq0$, the distribution of $X_{t+s}-X_{s}$ does not depend on $s$ (stationary increments).
		\item For any $s\geq1$ and $0 \leq t_{0} < t_{1}<...<t_{s}$, the random variables $X_{t_{0}}$, $X_{t_{1}}-X_{t_{0}}$,..., $X_{t_{s}}-X_{t_{s-1}}$ are independent (independent increments).
		\item The sample paths of $X_{t}$ are right continuous with left limits almost surely.
	\end{enumerate}
\end{definition}
The Poisson process $((N_{t})_{t\geq 0})$ and the Brownian motion $((W_{t})_{t\geq 0})$ are the examples of L$\acute{e}$vy processes. Suppose that the stock price process follows an exponential jump-diffusion model $S_{t}=S_{0}e^{rt+X_{t}}$, where $r$ is the risk-free interest rate, $S_{0}$ is the stock price at $t=0$ and $\left(X_{t}\right)_{t\geq 0}$ is a jump-diffusion L$\acute{e}$vy process. The jump-diffusion L$\acute{e}$vy process $\left(X_{t}\right)_{t\geq 0}$ is defined as
\begin{equation}
\label{eq:jdprocess}
X_{t}:=at+\sigma W_{t}+\sum_{i=1}^{N_{t}} G_{i},
\end{equation}
where $a$ and $\sigma > 0$ are real constants and $G_{i}$ are  independent and identically distributed random variable with density function $g(x)$. Furthermore, $W_{t}$, $G_{i}$, and $N_{t}$ are assumed to be
mutually independent. In Merton jump-diffusion model, $G_{i}$ follows Gaussian distribution with density function
\begin{equation}
\label{eq:mertondensity}
g(x)=\frac{1}{\sqrt{2\pi\sigma_{J}^2}}e^{-\frac{(x-\mu_{J})^2}{2\sigma_{J}^2}},
\end{equation}
where $\mu_{J}$ and $\sigma_{J} > 0$ represents mean and standard deviation respectively. In case of Kou jump-diffusion model, $G_{i}$ follows double exponential distribution with density function
\begin{equation}
\label{eq:koudensity}
g(x)=(1-p)\lambda_{+}e^{-\lambda+x}1_{x\geq0}+p\lambda_{-}e^{\lambda-x}1_{x<0},
\end{equation}
where $1_{A}$ is the indicator function with respect to a set $A$, $\lambda_{-}>0$, $\lambda_{+}>1$ and $0\leq p \leq 1$. The price of
European options under jump-diffusion models ($V(S,t)$) is obtained by solving a PIDE which is discussed in the following theorem \cite{Kwon11}.
\begin{theorem} Let the L$\acute{e}$vy process $\left(X_{t}\right)_{t\geq 0}$ has the L$\acute{e}$vy triplet $(\sigma^2, \gamma, \nu)$, where $\sigma >0$, $\gamma \in \mathbb{R}$
	and $\nu$ is the L$\acute{e}$vy measure. If
	\[
	\sigma>0 \:\: or \;\: \exists \:\: \beta \in (0,2)\:\:\:\: \mbox{such that} \:\:\: \liminf \limits_{\epsilon\rightarrow 0} \epsilon^{-\beta}\int_{-\epsilon}^{\epsilon} |x|^2 \nu(dx) >0,
	\]
	then the value of European option with the payoff function $Z(S_{T})$ is obtained by $V(S,t)$, where
	\[
	V:[0,\infty)\times[0,T] \rightarrow \mathbb{R},
	\]
	\[
	(S,t)\mapsto V(S,t)=e^{-r(T-t)}\mathbb{E}[Z(S_{T})|S_{t}=S],
	\]
	is a continuous map on $[0,\infty)\times[0,T]$, $C^{1,2}$ on $(0,\infty)\times(0,T)$, and satisfies the following PIDE
	\begin{equation}
	\begin{split}
	\label{eq:PIDE}
	-\frac{\partial V}{\partial t}(S,t)&=\frac{\sigma^2 S^2}{2}\frac{\partial^2 V}{\partial S^2}(S,t)+rS\frac{\partial V}{\partial S}(S,t)-rV(S,t)\\
	&+\int_{\mathbb{R}}^{}\left[ V(Se^{x},t)-V(S,t)-S(e^x-1)\frac{\partial V}{\partial S}(S,t)\right]\nu(dx),
	\end{split}
	\end{equation}
	on $(0,\infty)\times[0,T)$ with the final condition
	\[
	V(S,T)=Z(S) \:\:\:\: \forall \:\: S>0.
	\]
\end{theorem}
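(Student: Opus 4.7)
The plan is a martingale-based derivation combining risk-neutral pricing with the Itô-Lévy formula for semimartingales with jumps. First I would record the basic probabilistic setup: under the given equivalent martingale measure the discounted stock price $e^{-rt}S_t$ is a martingale, which together with a suitable integrability/growth condition on the payoff $Z$ forces
\[
M_t := e^{-rt}V(S_t,t) = \mathbb{E}\bigl[e^{-rT}Z(S_T)\bigm|\mathcal{F}_t\bigr]
\]
to be a true martingale by the tower property of conditional expectation.

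Second, I would prove the regularity statement for $V$. Continuity on $[0,\infty)\times[0,T]$ follows from dominated convergence applied to $(S,t)\mapsto e^{-r(T-t)}\mathbb{E}[Z(S_T)\mid S_t=S]$, using continuous dependence of $S_T$ on the initial data in distribution together with the polynomial growth of $Z$. The $C^{1,2}$ regularity on the open set $(0,\infty)\times(0,T)$ is the principal obstacle: it is a smoothing property of the Markov semigroup generated by $X_t$, and the stated hypothesis is precisely the non-degeneracy condition under which interior parabolic regularity for PIDEs holds. If $\sigma>0$, the Black--Scholes diffusion part gives a uniformly parabolic principal symbol; if instead $\liminf_{\epsilon\to 0}\epsilon^{-\beta}\int_{-\epsilon}^{\epsilon}|x|^2\nu(dx)>0$, the small-jump part of $\nu$ acts like a fractional Laplacian of order $\beta$ and still yields classical interior regularity. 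I would invoke known PIDE regularity results (for example from Bensoussan--Lions or Garroni--Menaldi) rather than redo this analysis from scratch.

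Third, armed with $V\in C^{1,2}((0,\infty)\times(0,T))$, I would apply the Itô-Lévy formula to $M_t=e^{-rt}V(S_t,t)$. Writing $S_t=S_0e^{rt+X_t}$ and decomposing $X_t$ as in \eqref{eq:jdprocess} into a drift, a Brownian part, and a compensated jump integral against $\nu$, Itô's formula gives
\[
dM_t = e^{-rt}\bigl[\mathcal{L}V\bigr](S_t,t)\,dt + d(\text{local martingale}),
\]
where $\mathcal{L}V$ collects $\partial_t V$, the $\tfrac{1}{2}\sigma^2 S^2\partial_{SS}V$ term, $rS\,\partial_S V$, $-rV$, and the integral $\int_{\mathbb{R}}\bigl(V(Se^x,t)-V(S,t)-S(e^x-1)\partial_S V\bigr)\nu(dx)$. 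The correction $-S(e^x-1)\partial_S V$ inside the integral arises from compensating the pure-jump part, the compensator being fixed by the martingale requirement on $e^{-rt}S_t$. A standard localisation argument promotes the local martingale to a true martingale on the relevant stochastic interval.

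Finally, since $M_t$ is a true martingale its finite-variation (drift) part must vanish identically, and setting $e^{-rt}\mathcal{L}V(S_t,t)\equiv 0$ on $(0,\infty)\times(0,T)$ yields exactly the PIDE \eqref{eq:PIDE}. The terminal condition $V(S,T)=Z(S)$ is immediate from the definition of $V$. The main obstacle throughout is the $C^{1,2}$ regularity; once that is in hand, the Itô-Lévy expansion and the vanishing-drift argument are routine bookkeeping.
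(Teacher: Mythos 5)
The paper never proves this theorem: it is imported verbatim as a known result, with the citation given next to the statement, and that source in turn rests on the standard verification theorem for exponential L\'evy models (Cont--Tankov type). Your sketch follows exactly that standard route --- risk-neutral pricing makes $e^{-rt}V(S_t,t)$ a martingale, the non-degeneracy hypothesis (either $\sigma>0$ or the small-jump activity condition, which makes the generator behave like a fractional Laplacian of order $\beta$) supplies interior $C^{1,2}$ regularity, the It\^o--L\'evy formula is applied, and the finite-variation part is forced to vanish --- so there is no methodological divergence to report; your outline is the proof that the cited literature gives. Two points you would need to make explicit in a full write-up. First, for the integral term in the PIDE to be finite and for $M_t$ to be integrable you need an exponential moment condition on $\nu$ (e.g.\ $\int_{|x|\ge 1}e^{x}\,\nu(dx)<\infty$, which does hold for the Merton and Kou measures used later in the paper) together with a growth bound on $Z$; the theorem statement suppresses these hypotheses and your proof should not, since without them even $\zeta=\int(e^x-1)g(x)\,dx$ is undefined. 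Second, passing from ``the drift of $M_t$ vanishes $dt\otimes d\mathbb{P}$-almost everywhere along paths'' to ``the PIDE holds at every point of $(0,\infty)\times(0,T)$'' uses that the law of $S_t$ has full support on $(0,\infty)$ for each $t>0$ together with continuity of the integro-differential operator applied to $V$; this is routine for these models but it is a genuine step rather than bookkeeping. With those caveats the proposal is sound and consistent with the argument the paper is implicitly relying on.
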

Let us consider the following transformation in the above PIDE (\ref{eq:PIDE})
\[
\tau=T-t, \:x=ln\left(\frac{S}{S_{0}}\right) \:\mbox{and} \:u(x,\tau) = V(S_{0}e^{x},T-\tau).
\]
Then, $u(x,\tau)$ is the solution of the following PIDE with constant coefficients
\begin{equation}
\begin{split}
\label{eq:pidefinal}
\frac{\partial u}{\partial \tau}(x,\tau)&=\mathbb{L}u,\: (x,\tau)\in \:(-\infty,\infty)\times(0,T],\\
u(x,0) &= f(x)\:\:\: \forall \:\:\: x \in (-\infty, \infty),
\end{split}
\end{equation}
where
\small
\begin{equation}
\label{eq:operator}
\mathbb{L}u=\frac{\sigma^2}{2}\frac{\partial^2 u}{\partial x^2}(x,\tau)+\left(r-\frac{\sigma^2}{2}-\lambda \zeta\right)\frac{\partial u}{\partial x}(x,\tau)-(r+\lambda)u(x,\tau)+\lambda \int_{\mathbb{R}}^{} u(y,\tau)g(y-x)dy,
\end{equation}
\normalsize
$\lambda$ is the intensity of the jump sizes and $\zeta$ = $\int_{\mathbb{R}}^{} (e^x-1)g(x)dx$. The initial condition for European call options is
\begin{equation}
\label{eq:initial_call}
f(x)= max(S_{0}e^{x}-K,0) \:\:\: \forall \:\:\:x \in \mathbb{R},
\end{equation}
and the equations describing the asymptotic behaviour of European call options are
\begin{equation}
\label{eq:boundary_call}
\lim_{x\rightarrow -\infty}u(x,\tau)=0 \:\:\:\:\:\mbox{and} \:\:\:\:\:\lim_{x\rightarrow \infty}[u(x,\tau)-(S_{0}e^{x}-Ke^{-r\tau})]=0.
\end{equation}
Similarly, the initial condition for European put options is
\begin{equation}
\label{eq:initial_put}
f(x)= max(K-S_{0}e^{x},0) \:\:\: \forall \:\:\:x \in \mathbb{R},
\end{equation}
and the asymptotic behaviour of European put options is described as
\begin{equation}
\label{eq:boundary_put}
\lim_{x\rightarrow -\infty}[u(x,\tau)-(Ke^{-r\tau}-S_{0}e^{x})]=0\:\:\:\:\:\mbox{and} \:\:\:\:\:\lim_{x\rightarrow \infty}u(x,\tau)=0.
\end{equation}
\section{Fourth-Order Compact Finite Difference Approximations for First and Second Derivatives}
\label{sec:compact}
Compact finite difference approximations for first and second derivatives are discussed in this section. From Taylor series expansion, second-order accurate finite difference approximations for first and second derivatives can be written as
\begin{equation}
\label{eq:firstf_2}
\Delta_{x}u_{i}=\frac{u_{i+1}-u_{i-1}}{2\delta x},\:\:\:\:\Delta^2_{x}u_{i}=\frac{u_{i+1}-2u_{i}+u_{i-1}}{\delta x^2},
\end{equation}
where $u_{i}$ is the value of $u$ at a typical grid point $x_{i}$. Moreover, fourth-order accurate compact finite difference approximations for first and second derivatives \cite{Lele92} are
\begin{equation}
\label{eq:firstc_4}
\frac{1}{4}u_{x_{i-1}}+u_{x_{i}}+\frac{1}{4}u_{x_{i+1}}= \frac{1}{\delta x}\left[-\frac{3}{4}u_{i-1}+\frac{3}{4}u_{i+1}\right],
\end{equation}
\begin{equation}
\label{eq:pade2_4}
\frac{1}{10}u_{xx_{i-1}}+u_{xx_{i}}+\frac{1}{10}u_{xx_{i+1}} = \frac{1}{\delta x^2}\left[\frac{6}{5}u_{i-1}-\frac{12}{5}u_{i}+\frac{6}{5}u_{i+1}\right],
\end{equation}
where $u_{x_{i}}$, $u_{xx_{i}}$ are first and second derivatives of unknown $u$ at grid point $x_{i}$. If first derivative is also considered as a variable then from Equation~(\ref{eq:firstc_4}) we can write
\begin{equation}
\label{eq:pade3_4}
\frac{1}{4}u_{xx_{i-1}}+u_{xx_{i}}+\frac{1}{4}u_{xx_{i+1}}= \frac{1}{\delta x}\left[-\frac{3}{4}u_{x_{i-1}}+\frac{3}{4}u_{x_{i+1}}\right].
\end{equation}
Eliminating $u_{xx_{i-1}}$ and $u_{xx_{i+1}}$ from Equations~(\ref{eq:pade2_4}) and~(\ref{eq:pade3_4}), compact finite difference approximation for second derivative is
\begin{equation}
\label{eq:secondc1_4}
u_{xx_{i}}=2\frac{u_{i+1}-2u_{i}+u_{i-1}}{\delta x^2}-\frac{u_{x_{i+1}}-u_{x_{i-1}}}{2\delta x}.
\end{equation}
Substituting the values from Equation~(\ref{eq:firstf_2}) into Equation~(\ref{eq:secondc1_4}), we get
\begin{equation}
\label{eq:secondc_4}
u_{xx_{i}}=2\Delta^2_{x}u_{i}-\Delta_{x}u_{x_{i}}.
\end{equation}
It is observed that Equations~(\ref{eq:firstc_4}) and~(\ref{eq:secondc_4}) provide fourth-order accurate compact finite difference approximations for first and second derivatives. The value of  $u_{x_{i}}$ in Equation~(\ref{eq:secondc_4}) is obtained from Equation~(\ref{eq:firstc_4}). One-sided compact finite difference approximations are discussed in \cite{KPMI17} for non-periodic boundary conditions. It is shown in Figure~\ref{fig:grid} that lesser number of grid points are required with compact finite difference approximation as compared to finite difference approximation to obtain high-order accuracy.
\begin{figure}[h!]
	\begin{center}
		\includegraphics[trim = 0cm 20cm 0cm 3cm, clip, width=1.1\textwidth]{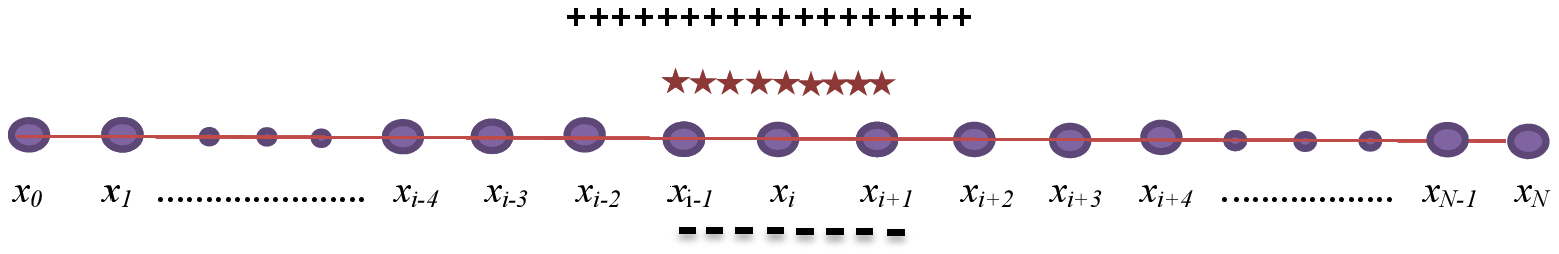}
		\vspace{-1.5 cm}
		\caption{Number of grid points required for first derivative approximation : (a). $\textbf{\textbf{-}} \:\textbf{\textbf{-}} \: \textbf{\textbf{-}}  \: \textbf{\textbf{-}} \:\textbf{\textbf{-}}  \:\textbf{\textbf{-}} \: \textbf{\textbf{-}} \: \textbf{\textbf{-}}$ $(x_{i-1}, x_{i}, x_{i+1})$ : $(\delta x^4)$ compact finite difference approximation, (b). $\star\star\star\star\star\star\star\star$ $(x_{i-1}, x_{i}, x_{i+1})$ : $O(\delta x^2)$ finite difference approximation and (c). $\textbf{\textbf{+}}\textbf{\textbf{+}}\textbf{\textbf{+}}\textbf{\textbf{+}}\textbf{\textbf{+}}\textbf{\textbf{+}}\textbf{\textbf{+}}\textbf{\textbf{+}}$ $(x_{i-2}, x_{i-1}, x_{i}, x_{i+1}, x_{i+2})$ : $O(\delta x^4)$ finite difference approximation.}
		\label{fig:grid}
	\end{center}
\end{figure}
\subsection{Fourier analysis}
\label{ssec:fourier}
In this section, the wave numbers and the modified wave numbers for first and second derivative approximations are discussed in brief. A detailed discussion on the resolution characteristics of various order compact finite difference approximations is given in \cite{Lele92}. The trial function for this one on a periodic domain is $u(x)=e^{I\omega x}$, where $I=\sqrt{-1}$ and $\omega$ is known as wavenumber. The relations between $\omega$ (wave number), $\omega'$ (modified wave number for first derivative) and $\omega''$ (modified wave number for second derivative) for finite difference approximations and proposed compact finite difference approximations are given in \cite{KPMA17}. The wave numbers versus modified wave numbers are plotted in Figures~\ref{fig:fou_first} and~\ref{fig:fou_second} for first and second derivative approximations and it is observed from figures that compact finite difference approximations have better resolution characteristics as compared to the finite difference approximations.
\begin{figure}
	\begin{center}
		\subfigure[]{%
			\includegraphics[scale=0.420]{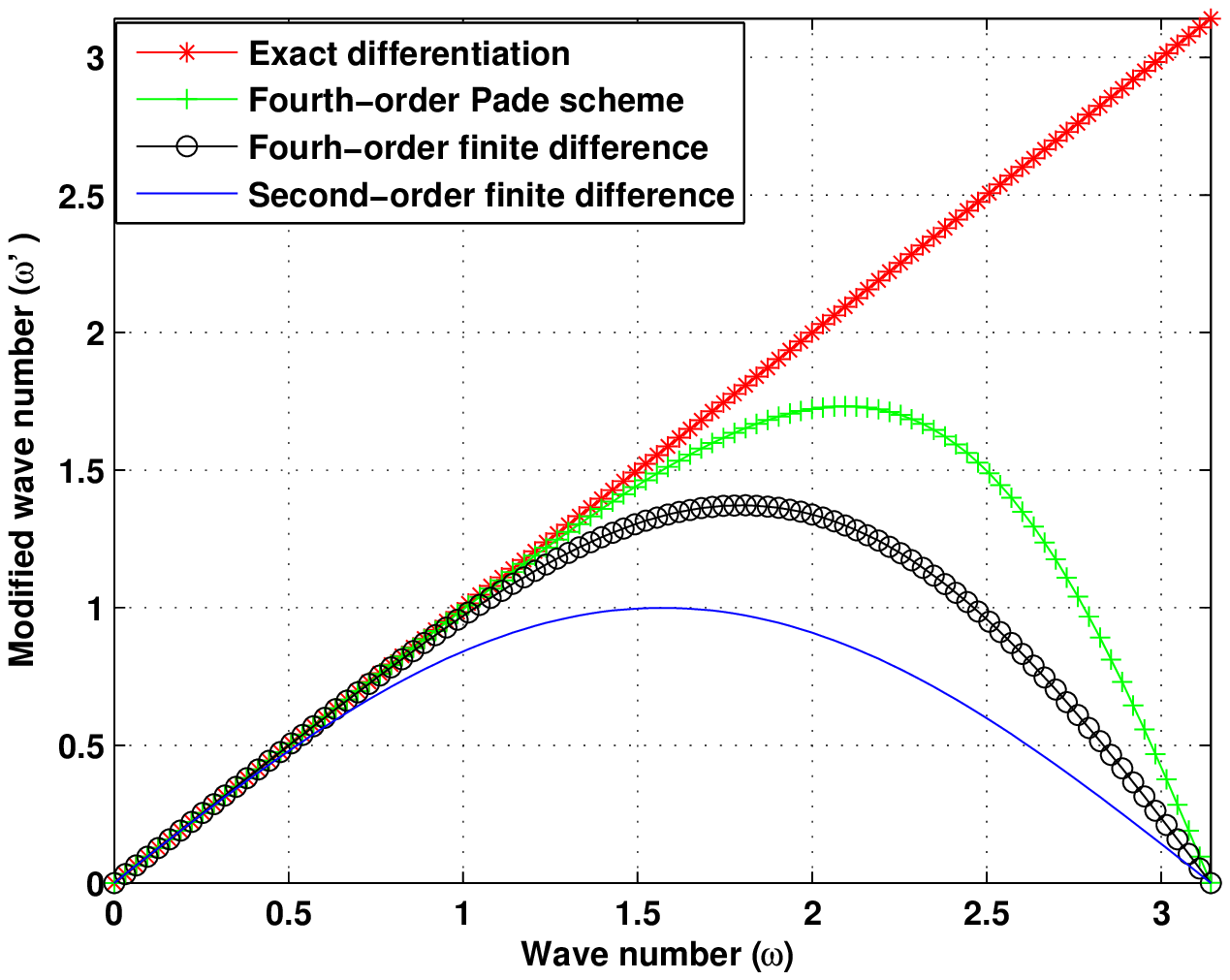}
			\label{fig:fou_first}}%
		\subfigure[]{%
			\includegraphics[scale=0.420]{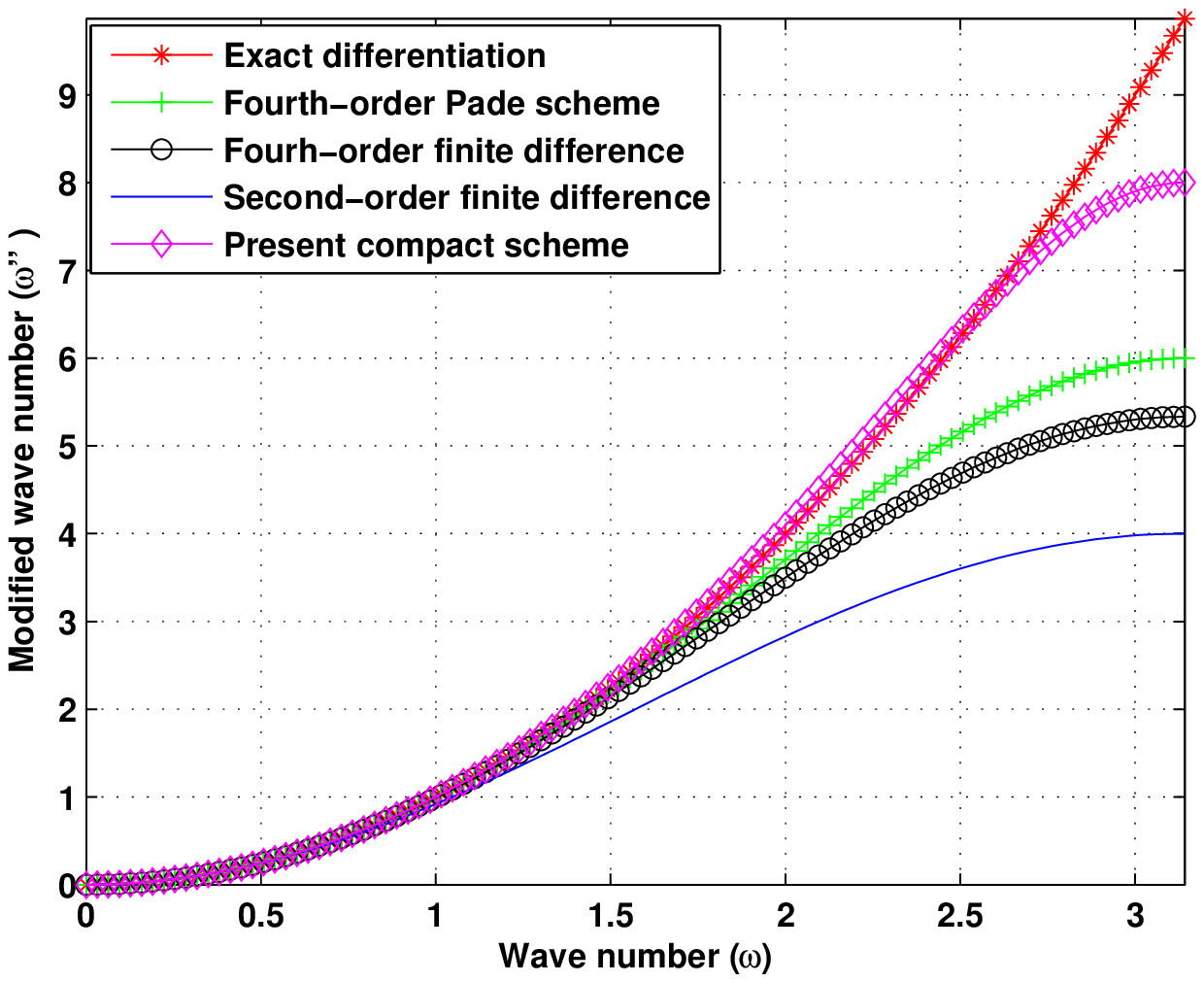}
			\label{fig:fou_second}}
		\caption{Modified wave number and wave number for various finite difference schemes: (a) First derivative approximation, (b). Second derivative
			approximation.}
		\label{fig:fou}
	\end{center}
\end{figure}
\section{The Fully Discrete Problem}
\label{sec:disc_prob}
\par The domain of the spatial variable is restricted to a bounded interval $\Omega=(-L,L)$ for some fixed real number $L$ in order to solve the PIDE (\ref{eq:pidefinal}) numerically. For given positive integers $M$ and $N$, let $\delta x=2L/N$
and $\delta \tau=T/M$ and in this way we define $x_{n}=-L+n\delta x$ $(n=0,1,....,N)$
and $\tau_{m}=m\delta \tau$ $(m=0,1,...,M)$. Let us first introduce the numerical approximation for $\frac{\partial u}{\partial \tau}$. The second-order accurate finite difference approximation for $\frac{\partial u}{\partial \tau}$ at each grid point $(x_{n},\tau_{m})$ is given by
\begin{equation}
\frac{\partial u}{\partial \tau}(x_{n},\tau_{m})=\frac{u^{m+1}_{n}-u^{m-1}_{n}}{2\delta \tau}+O(\delta \tau^2),\:\:\: \mbox{for}\:\:\: m \geq 1,
\end{equation}
where $u^{m}_{n}=u(x_{n},\tau_{m})$. The operator $\mathbb{L}$ in Equation~(\ref{eq:operator}) can be written as
\begin{equation}
\label{eq:l}
\mathbb{L}u(x,\tau)=\mathbb{D}u(x,\tau)+\mathbb{I}u(x,\tau)-(r+\lambda)u(x,\tau),
\end{equation}
where
\begin{equation}
\label{eq:all_operator}
\begin{split}
\mathbb{D}u(x,\tau) &= \frac{\sigma^2}{2}\frac{\partial^2 u}{\partial x^2}(x,\tau)+(r-\frac{\sigma^2}{2}-\lambda \zeta)\frac{\partial u}{\partial x}(x,\tau),\\
\mathbb{I}u(x,\tau) &= \lambda \int_{\mathbb{R}}^{}u(y,\tau)g(y-x)dy.
\end{split}
\end{equation}
Now, the numerical approximations for the differential operator $\mathbb{D}$ is discussed. If $\mathbb{D}_{\delta}$ represents the discrete approximations for the operator $\mathbb{D}$ then
\begin{equation*}
\mathbb{D}u^{m}_{n} \approx \mathbb{D_{\delta}}\left(\frac{u^{m+1}_{n}+u^{m-1}_{n}}{2}\right),
\end{equation*}
where
\begin{equation}
\label{eq:ddel}
\mathbb{D_{\delta}}{u^{m}_{n}}=\frac{\sigma^2}{2}u^{m}_{xx_{n}}+\left(r-\frac{\sigma^2}{2}-\lambda \zeta\right
)u^{m}_{x_{n}},
\end{equation}
$u^{m}_{x_{n}}$ and $u^{m}_{xx_{n}}$ are the first and second derivative approximations of $u_n^m$ respectively. Substituting the value of $u^{m}_{xx_{n}}$ from Equation~(\ref{eq:secondc_4}) into the Equation~(\ref{eq:ddel}), we get
\begin{equation}
\label{eq:ddelta}
\mathbb{D_{\delta}}{u^{m}_{n}}=\frac{\sigma^2}{2}\left(2\Delta^2_{x}u^{m}_{n}-\Delta_{x}u_{x_{n}}^{m}\right)+\left(r-\frac{\sigma^2}{2}-\lambda \zeta\right)u^{m}_{x_{n}}.
\end{equation}
In this way, we eliminate the compact finite difference approximations of second derivative using the unknowns and their first derivative approximations.
\par Now, the discrete approximation for the integral operator $\mathbb{I}$ using fourth-order accurate composite Simpson's rule is discussed. Integral operator
$\mathbb{I}u(x,\tau)$ given in Equation~(\ref{eq:all_operator}) is divided into two parts namely on $\Omega=(-L,L)$ and $\mathbb{R}\backslash\Omega$. The value of integration over $\mathbb{R}\backslash\Omega$ for Merton jump-diffusion model is given as
\begin{equation}
\label{eq:psimerton}
\Upsilon(x,\tau,L)=Ke^{-r\tau}\Phi\left(-\frac{x+\mu_{J}+L}{\sigma_{J}}\right)-S_{0}e^{x+\frac{\sigma^2_{J}}{2}+\mu_{J}}\Phi\left(-\frac{x+\sigma^2_{J}+\mu_{J}+L}{\sigma_{J}}\right),
\end{equation}
where $\Phi(x)$ is the cumulative distribution function of standard normal distribution. Similarly, the value of integration over $\mathbb{R}\backslash\Omega$ for Kou jump-diffusion model is given as
\begin{equation}
\label{eq:psikou}
\Upsilon(x,\tau,L)=K(1-p)e^{-r\tau-\lambda_{-}(L+x)}-S_{0}(1-p)\frac{\lambda_{-}}{\lambda_{-}+1}e^{-\lambda_{-}x-(\lambda_{-}+1)L}.
\end{equation}
The value of integral $\mathbb{I}u(x,\tau)$ on the interval $\Omega$ using composite Simpson's rule is given as
\small
\begin{equation}
\begin{split}
\label{eq:simp}
\int_{\Omega}^{}u(y,\tau_{m})g(y-x_{n})dy &= \frac{\delta x}{3}\left(u^{m}_{0}g_{n,0}+
4\sum_{i=1}^{\frac{N}{2}} u^{m}_{2i-1}g_{n,2i-1}+2\sum_{i=1}^{\frac{N}{2}-1} u^{m}_{2i}g_{n,2i}+u^{m}_{N}g_{n,N}\right)\\
&+O(\delta x^4),
\end{split}
\end{equation}
\normalsize
where $g_{n,i}=g(x_{i}-x_{n})$. In order to write the above integral approximation~(\ref{eq:simp}) in matrix-vector multiplication form, we define
\begin{center}
	$B_{g}=\frac{\delta x}{3}\left[
	\begin{array}{ccccc}
	4g(x_1-x_1) & 2g(x_2-x_1) & 4g(x_3-x_1) &  \dots  & 4g(x_{N-1}-x_1) \\
	4g(x_1-x_2) & 2g(x_2-x_2) & 4g(x_3-x_2) &  \dots  & 4g(x_{N-1}-x_2) \\
	4g(x_1-x_3) & 2g(x_2-x_3) & 4g(x_3-x_3) &  \dots  & 4g(x_{N-1}-x_3) \\
	\dots &   \dots & \dots & \dots & \dots \\
	4g(x_1-x_{N-1}) & 2g(x_2-x_{N-1}) & 4g(x_3-x_{N-1}) &  \dots  & 4g(x_{N-1}-x_{N-1}) \\
	\end{array}
	\right],$
\end{center}
\begin{equation*}
u^{m}=\begin{bmatrix}
u_{1}^{m} \\         u_{2}^{m} \\   \vdots \\     u_{N-1}^{m}
\end{bmatrix},\:\:\:
P^{m}=\frac{\delta x}{3}\begin{bmatrix}
u_{0}^{m}g_{n,0} \\         0 \\     \vdots \\    u_{N}^{m}g_{n,N}
\end{bmatrix}.
\end{equation*}
The matrix $B_{g}$ can be transformed into a Toeplitz matrix by transferring the coefficient $[4,2,4,...,2,4]^{T}$ to the vector $u$ as follows
\begin{center}
	$\tilde{B}_{g}=\frac{\delta x}{3}\left[
	\begin{array}{ccccc}
	g(x_1-x_1) & g(x_2-x_1) & g(x_3-x_1) &  \dots  & g(x_{N-1}-x_1) \\
	g(x_1-x_2) & g(x_2-x_2) & g(x_3-x_2) &  \dots  & g(x_{N-1}-x_2) \\
	g(x_1-x_3) & g(x_2-x_3) & g(x_3-x_3) &  \dots  & g(x_{N-1}-x_3) \\
	\dots &   \dots & \dots & \dots & \dots \\
	g(x_1-x_{N-1}) & g(x_2-x_{N-1}) & g(x_3-x_{N-1}) &  \dots  & g(x_{N-1}-x_{N-1}) \\
	\end{array}
	\right],$
\end{center}
and
\[
\tilde{u}^{m}=\left[4u_{1}^{m},2u_{2}^{m},..,2u_{N-2}^{m},4u_{N-1}^{m}\right]^T.
\]
The above matrix-vector product $(\tilde{B}_{g}\tilde{u}^{m})$ is obtained with $O(N\log{}N)$ complexity by embedding the matrix $\hat{B}_{g}$ in a circulant matrix and using FFT for matrix-vector multiplication \cite{Chan07, Chan96}. Therefore, the discrete approximation $(\mathbb{I}_{\delta}u)$ for the integral operator $(\mathbb{I}u)$ is
\begin{equation}
\mathbb{I_{\delta}}u^{m}=\lambda\left(\tilde{B}_{g}\tilde{u}^{m}+P^{m}+\Upsilon(x,\tau,L)\right).
\end{equation}
If $\mathbb{L}_{\delta}$ denote the discrete approximation of operator $\mathbb{L}$ (defined in Equation~(\ref{eq:l})) then
\begin{equation}
\label{eq:ldelta}
\mathbb{L}_{\delta}u^m_{n}=\mathbb{D_{\delta}}\left(\frac{u^{m+1}_{n}+u^{m-1}_{n}}{2}\right)+\mathbb{I}_{\delta}u^m_{n}-(r+\lambda)u^m_{n}.
\end{equation}
The above three-time levels discretization $(\ref{eq:ldelta})$ of integro-differential operator $\mathbb{L}$ is used for the solution of PIDE (\ref{eq:pidefinal}). We find $U^{m}_{n}$ (the approximate value of $u^{m}_{n}$) which is the solution of following problem
\begin{equation}
\label{eq:pidediscre}
\frac{U^{m+1}_{n}-U^{m-1}_{n}}{2\delta\tau}=\mathbb{D_{\delta}}\left(\frac{U^{m+1}_{n}+U^{m-1}_{n}}{2}\right)+\mathbb{I_{\delta}}U^{m}_{n}-(r+\lambda)U^{m}_{n},\:\:\:\mbox{for $1\leq m\leq M-1$},
\end{equation}
with suitable initial and boundary conditions. Using the value of $D_{\delta}U^{m}_{n}$ from Equation~(\ref{eq:ddelta}) in Equation~(\ref{eq:pidediscre}), we obtain
\begin{equation}
\label{eq:corec2}
\begin{split}
\frac{U^{m+1}_{n}-U^{m-1}_{n}}{2\delta\tau}
&=\frac{1}{2}\left[\frac{\sigma^2}{2}\left(2\Delta_{x}^2U^{m+1}_{n}-\Delta_{x}U_{x_{n}}^{m+1}\right)+\left(r-\frac{\sigma^2}{2}-\lambda \zeta\right)U_{x_{n}}^{m+1}\right]\\
&+\frac{1}{2}\left[\frac{\sigma^2}{2}\left(2\Delta_{x}^2U^{m-1}_{n}-\Delta_{x}U_{x_{n}}^{m-1}\right)+\left(r-\frac{\sigma^2}{2}-\lambda \zeta\right)U_{x_{n}}^{m-1}\right]\\
&+\mathbb{I_{\delta}}U^{m}_{n}-(r+\lambda)U^{m}_{n},\:\:\:\mbox{for $1\leq m\leq M-1$}.
\end{split}
\end{equation}
Re-arranging the terms, we get
\begin{equation}
\label{eq:corec3}
\begin{split}
(I-\delta \tau \frac{\sigma^2}{2} 2 \Delta_{x}^2)U^{m+1}_{n}&={\delta \tau}\left[-\frac{\sigma^2}{2}\Delta_{x}U_{x_{n}}^{m+1}+\left(r-\frac{\sigma^2}{2}-\lambda \zeta\right)U_{x_{n}}^{m+1}\right]\\
&+{\delta\tau}\left[\frac{\sigma^2}{2}\left(2\Delta_{x}^2U^{m-1}_{n}-\Delta_{x}U_{x_{n}}^{m-1}\right)+\left(r-\frac{\sigma^2}{2}-\lambda \zeta\right)U_{x_{n}}^{m-1}\right]\\
&+U^{m-1}_{n}+2\delta \tau \mathbb{I_{\delta}}U^{m}_{n}-2\delta \tau (r+\lambda)U^{m}_{n},\:\:\:\mbox{for $1 \leq m\leq M-1$}.
\end{split}
\end{equation}
Let us introduce the following notation
\[
\textbf{U}^m=(U_{1}^m,U_{2}^m,...,U_{N-1}^m)^T \: \mbox{and} \:\:\:\textbf{U}_{x}^m=(U_{x_{1}}^m,U_{x_{2}}^m,...,U_{x_{N-1}}^m)^T,
\]
the resulting system of equations corresponding to the difference scheme (\ref{eq:corec3}) can be written as
\begin{equation}
\label{eq:corec4}
A\textbf{U}^{m+1}=F(\textbf{U}^{m}, \textbf{U}^{m-1}, \textbf{U}_{x}^{m-1},\textbf{U}_{x}^{m+1}).
\end{equation}
The presence of $\textbf{U}_{x}^{m+1}$ on the right hand side of the Equation~(\ref{eq:corec4}) bind us to use a predictor corrector method. Therefore, correcting to convergence approach is used and also summarized in the following algorithm \cite{Lambert91}.\\
\textbf{Algorithm for Correcting to Convergence Approach}\\
1. Start with $\textbf{U}^{m}$. \\
2. Obtain $\textbf{U}_{x}^{m}$ using Equation~(\ref{eq:firstc_4}). \\
3. Take $\textbf{U}^{m+1}_{old}=\textbf{U}^{m}$, $\textbf{U}_{x_{old}}^{m+1}=\textbf{U}_{x}^{m}$.\\
4. Correct to $\textbf{U}^{m+1}_{new}$ using Equation~(\ref{eq:corec3}).\\
5. If $\|\textbf{U}^{m+1}_{new}-\textbf{U}^{m+1}_{old}\|_{\infty}$ $<$ $\epsilon$, then $\textbf{U}^{m+1}_{new}=\textbf{U}^{m+1}_{old}$.\\
6. Obtain $\textbf{U}_{x_{new}}^{m+1}$ using Equation~(\ref{eq:firstc_4}).\\
7. Take $\textbf{U}^{m+1}_{old}=\textbf{U}^{m+1}_{new}$, $\textbf{U}^{m+1}_{x_{old}}=\textbf{U}^{m+1}_{x_{new}}$ and go to step $4$.\\
The stopping criterion for inner iteration can be set at $\epsilon=10^{-12}$ in above approach. Approximately four iterations are needed at
each time interval to get $\|\textbf{U}^{m+1}_{new}-\textbf{U}^{m+1}_{old}\|_{\infty} < 10^{-12}$. Since the proposed compact scheme~(\ref{eq:corec4}) is three-time levels, two initial values on the zeroth and first time levels are required to start the computation. The initial condition provides the value of $u$ at $\tau =0$ and the value of $u$ at first time level is obtained by IMEX-scheme used in \cite{Cont05}.
\section{Consistency and Stability Analysis}
\label{sec:analysis}
\subsection{Consistency}
\label{ssec:consistency}
\par The consistency of the proposed three-time levels compact scheme ~(\ref{eq:corec3}) is proved in the following theorem.
\begin{theorem}
	Let $u \in C^{\infty} \left([-L,L]\times(0,T]\right)$ satisfy the initial and boundary conditions~(\ref{eq:initial_put})-~(\ref{eq:boundary_put}). Then for sufficiently small $\delta \tau$ and $\delta x$,
	\small
	\begin{equation}
	\label{eq:temp_cons}
	\frac{\partial u}{\partial \tau}(x_n,\tau_m)-\mathbb{L}u(x_n,\tau_m)-\left(\frac{u(x_n,\tau_{m+1})-u(x_n,\tau_{m-1})}{2 \delta \tau}-\mathbb{L}_{\delta}u(x_n,\tau_m)\right)=O(\delta \tau^2+\delta x^4),
	\end{equation}
	\normalsize
	where $\mathbb{L}$ and $\mathbb{L}_{\delta}$ are defined in Equations~(\ref{eq:operator}) and ~(\ref{eq:ldelta}) respectively and $(x_n,\tau_m) \in (-L,L)\times(0,T]$.
\end{theorem}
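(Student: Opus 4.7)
The plan is to decompose the truncation error in equation~(\ref{eq:temp_cons}) into a pure temporal piece, a purely spatial piece coming from the differential operator $\mathbb{D}$, and an integral piece coming from $\mathbb{I}$, and then bound each contribution separately using Taylor expansions and the known orders of the constituent approximations.

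First I would write $\mathbb{L}_\delta u(x_n,\tau_m) = \mathbb{D}_\delta\!\left(\tfrac12(u(x_n,\tau_{m+1})+u(x_n,\tau_{m-1}))\right) + \mathbb{I}_\delta u(x_n,\tau_m) - (r+\lambda)u(x_n,\tau_m)$ and similarly expand $\mathbb{L}u = \mathbb{D}u + \mathbb{I}u - (r+\lambda)u$. This lets me rewrite the expression inside the outer parentheses of~(\ref{eq:temp_cons}) as
\begin{equation*}
T_1 + T_2 + T_3,
\end{equation*}
where $T_1 = \tfrac{u(x_n,\tau_{m+1})-u(x_n,\tau_{m-1})}{2\delta\tau} - \tfrac{\partial u}{\partial\tau}(x_n,\tau_m)$, $T_2 = \mathbb{D}u(x_n,\tau_m) - \mathbb{D}_\delta\!\left(\tfrac12(u(x_n,\tau_{m+1})+u(x_n,\tau_{m-1}))\right)$, and $T_3 = \mathbb{I}u(x_n,\tau_m) - \mathbb{I}_\delta u(x_n,\tau_m)$; the zero-order terms cancel exactly.

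For $T_1$, a standard three-term Taylor expansion about $\tau_m$ (using $u\in C^\infty$) immediately gives $T_1 = -\tfrac{\delta\tau^2}{6}\partial_\tau^3 u(x_n,\tau_m) + O(\delta\tau^4)$, hence $T_1 = O(\delta\tau^2)$. For $T_2$, I would first Taylor-expand the time average: $\tfrac12(u(x_n,\tau_{m+1}) + u(x_n,\tau_{m-1})) = u(x_n,\tau_m) + \tfrac{\delta\tau^2}{2}\partial_\tau^2 u(x_n,\tau_m) + O(\delta\tau^4)$. Because $\mathbb{D}_\delta$ is a linear operator with coefficients independent of $\delta\tau$ and with entries bounded uniformly in $\delta x$ when acting on smooth functions, this reduces $T_2$ to $\mathbb{D}u(x_n,\tau_m) - \mathbb{D}_\delta u(x_n,\tau_m) + O(\delta\tau^2)$. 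The remaining piece is exactly the truncation error of the compact approximation~(\ref{eq:ddelta}): using equations~(\ref{eq:firstc_4}) and~(\ref{eq:secondc_4}), which the paper has already shown to be fourth-order accurate on smooth $u$, one obtains $\mathbb{D}u(x_n,\tau_m) - \mathbb{D}_\delta u(x_n,\tau_m) = O(\delta x^4)$, so $T_2 = O(\delta\tau^2 + \delta x^4)$.

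For $T_3$, I would split $\mathbb{I}u(x_n,\tau_m) = \lambda\!\int_\Omega u(y,\tau_m)g(y-x_n)\,dy + \lambda\!\int_{\mathbb R\setminus\Omega} u(y,\tau_m)g(y-x_n)\,dy$. The exterior piece equals $\lambda\Upsilon(x_n,\tau_m,L)$ exactly by the closed-form expressions~(\ref{eq:psimerton}) and~(\ref{eq:psikou}) (derived using the asymptotic conditions~(\ref{eq:boundary_call})/(\ref{eq:boundary_put})), so it cancels against the corresponding term in $\mathbb{I}_\delta u$. The interior piece is approximated by composite Simpson's rule; since $u$ and $g$ are smooth on $\Omega$, equation~(\ref{eq:simp}) gives Simpson's classical bound $O(\delta x^4)$. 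Assembling the three estimates yields $T_1 + T_2 + T_3 = O(\delta\tau^2 + \delta x^4)$, which is exactly~(\ref{eq:temp_cons}).

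The main obstacle is the bookkeeping in $T_2$: one has to verify that when the compact second-derivative formula~(\ref{eq:secondc_4}) is substituted in, the overall scheme truly inherits the $O(\delta x^4)$ local truncation error, because $\mathbb{D}_\delta$ mixes the compact approximation $u_{x_n}$ (defined implicitly through the tri-diagonal system~(\ref{eq:firstc_4})) with the explicit stencil $\Delta_x^2$. The cleanest way to handle this is to Taylor-expand both sides of~(\ref{eq:firstc_4}) and~(\ref{eq:pade2_4}) about $x_n$, eliminate $u_{xx_{n\pm1}}$ as in the derivation of~(\ref{eq:secondc1_4}), and read off the leading error constant $\propto \delta x^4$; the integral $T_3$ is straightforward once the exterior integral is handled exactly, and $T_1$ is immediate.
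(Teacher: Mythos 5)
Your proposal is correct and follows essentially the same route as the paper: the same decomposition into a temporal term, a differential-operator term, and an integral term, with Taylor expansions in time for the central difference and the two-level average, fourth-order accuracy of the compact formulas~(\ref{eq:firstc_4}) and~(\ref{eq:secondc_4}) for the spatial part, and the composite Simpson bound for the integral. The only cosmetic difference is that you Taylor-expand the time average of $u$ before applying $\mathbb{D}_{\delta}$, whereas the paper adds and subtracts the exact derivatives at $\tau_{m\pm1}$ and estimates each piece separately; both yield the same $O(\delta\tau^{2}+\delta x^{4})$ bound.
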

\begin{proof} The second-order accurate finite difference approximation for $\frac{\partial u}{\partial \tau}$ at each grid point $(x_{n},\tau_{m})$ can be written as
	\small
	\begin{equation}
	\label{eq:timeerr}
	\left|\frac{\partial u}{\partial \tau}(x_n,\tau_m)-\frac{u(x_n,\tau_{m+1})-u(x_n,\tau_{m-1})}{2 \delta \tau}\right|\\
	\leq \frac{\delta \tau^2}{6} \sup_{\tau \in [\tau_{m-1}, \tau_{m+1}]}\left|\frac{\partial^3 u}{\partial \tau^3}(x_n,\tau)\right|.
	\end{equation}
	\normalsize
	Since the compact finite difference approximations for first and second derivatives (discussed in Section~\ref{sec:compact}) are fourth-order accurate, therefore
	\[
	\left\lvert\frac{\partial^2 u}{\partial x^2}(x_n,\tau_{m+1})-u_{xx_{n}}^{m+1}\right\rvert=O(\delta x^4),\:\:\:\:\:\:\left \lvert\frac{\partial^2 u}{\partial x^2}(x_n,\tau_{m-1})-u_{xx_{n}}^{m-1}\right\rvert=O(\delta x^4),
	\]
	\[
	\left\lvert\frac{\partial u}{\partial x}(x_n,\tau_{m+1})-u_{x_{n}}^{m+1}\right\rvert=O(\delta x^4),\:\:\:\:\:\:\left \lvert\frac{\partial u}{\partial x}(x_n,\tau_{m-1})-u_{x_{n}}^{m-1}\right\rvert=O(\delta x^4).
	\]
	Let us now discuss the first and second derivatives in operator $\mathbb{D}u$. From Taylor series expansion for second derivative, we write
	\small
	\[
	\left \lvert \frac{\partial^2 u}{\partial x^2}(x_n,\tau_m)-\frac{1}{2}\left[\frac{\partial^2 u}{\partial x^2}(x_n,\tau_{m+1})+\frac{\partial^2 u}{\partial x^2}(x_n,\tau_{m-1})\right]\right\rvert \leq \frac{\delta \tau^2}{2} \sup_{\tau \in [\tau_{m-1}, \tau_{m+1}]}\left|\frac{\partial^4 u}{\partial x^2 \partial \tau^2 }(x_n,\tau)\right|.
	\]
	\normalsize
	The following relation can be deduced for compact finite difference approximation for second derivative
	\small
	\begin{equation*}
	\begin{aligned}
	\frac{\partial^2 u}{\partial x^2}(x_n,\tau_m)-\frac{1}{2}\left[u_{xx_{n}}^{m+1}+u_{xx_{n}}^{m-1}\right]
	& = \frac{\partial^2 u}{\partial x^2}(x_n,\tau_m)-\frac{1}{2}\left[u_{xx_{n}}^{m+1}+u_{xx_{n}}^{m-1}\right]-\frac{1}{2}\frac{\partial^2 u}{\partial x^2}(x_n,\tau_{m+1})\\
	& + \frac{1}{2}\frac{\partial^2 u}{\partial x^2}(x_n,\tau_{m+1})-\frac{1}{2}\frac{\partial^2 u}{\partial x^2}(x_n,\tau_{m-1})+\frac{1}{2}\frac{\partial^2 u}{\partial x^2}(x_n,\tau_{m-1}),\\
	& = \frac{\partial^2 u}{\partial x^2}(x_n,\tau_m)-\frac{1}{2}\left[\frac{\partial^2 u}{\partial x^2}(x_n,\tau_{m+1})+\frac{\partial^2 u}{\partial x^2}(x_n,\tau_{m-1})\right]\\
	& + \frac{1}{2}\left[\frac{\partial^2 u}{\partial x^2}(x_n,\tau_{m+1})-u_{xx_{n}}^{m+1}\right]+\frac{1}{2}\left[\frac{\partial^2 u}{\partial x^2}(x_n,\tau_{m-1})-u_{xx_{n}}^{m-1}\right],\\
	& = O(\delta \tau^2+\delta x^4).\\
	\end{aligned}
	\end{equation*}
	\normalsize
	From Taylor series expansion for the first derivative, we get
	\small
	\[
	\left\lvert\frac{\partial u}{\partial x}(x_n,\tau_m)-\frac{1}{2}\left[\frac{\partial u}{\partial x}(x_n,\tau_{m+1})+\frac{\partial u}{\partial x}(x_n,\tau_{m-1})\right]\right\rvert \leq \frac{\delta \tau^2}{2} \sup_{\tau \in [\tau_{m-1}, \tau_{m+1}]}\left|\frac{\partial^3 u}{\partial x \partial \tau^2 }(x_n,\tau)\right|.
	\]
	\normalsize
	The compact finite difference approximation for first derivative provides the following relation
	\small
	\begin{equation*}
	\begin{split}
	\frac{\partial u}{\partial x}(x_n,\tau_m)-\frac{1}{2}\left[u_{x_{n}}^{m+1}+u_{x_{n}}^{m-1}\right]
	& = \frac{\partial u}{\partial x}(x_n,\tau_m)-\frac{1}{2}\left[u_{x_{n}}^{m+1}+u_{x_{n}}^{m-1}\right]-\frac{1}{2}\frac{\partial u}{\partial x}(x_n,\tau_{m+1})\\
	& + \frac{1}{2}\frac{\partial u}{\partial x}(x_n,\tau_{m+1})-\frac{1}{2}\frac{\partial u}{\partial x}(x_n,\tau_{m-1})+\frac{1}{2}\frac{\partial u}{\partial x}(x_n,\tau_{m-1}),\\
	& = \frac{\partial u}{\partial x}(x_n,\tau_m)-\frac{1}{2}\left[\frac{\partial u}{\partial x}(x_n,\tau_{m+1})+\frac{\partial u}{\partial x}(x_n,\tau_{m-1})\right]\\
	& + \frac{1}{2}\left[\frac{\partial u}{\partial x}(x_n,\tau_{m+1})-u_{x_{n}}^{m+1}\right]+\frac{1}{2}\left[\frac{\partial u}{\partial x}(x_n,\tau_{m-1})-u_{x_{n}}^{m-1}\right],\\
	& = O(\delta \tau^2+\delta x^4).
	\end{split}
	\end{equation*}
	\normalsize
	Therefore, the error between the operators $\mathbb{D}_{\delta}u$ and $\mathbb{D}u$ is
	\begin{equation}
	\label{eq:differror}
	\mathbb{D} u(x_n,\tau_m)-\mathbb{D}_{\delta}\left(\frac{u(x_n,\tau_{m+1})+u(x_n,\tau_{m-1})}{2}\right)=O(\delta \tau^2+\delta x^4).
	\end{equation}
	Further, Equation~(\ref{eq:simp}) provides the error between the integral operator $\mathbb{I}u$ and $\mathbb{I}_{\delta}u$  as 
	\begin{equation}
	\label{eq:interror}
	\mathbb{I}u(x_n,\tau_m)-\mathbb{I}_{\delta}u(x_n,\tau_m)=O(\delta x^4).
	\end{equation}
	From Equations~(\ref{eq:timeerr}),
	~(\ref{eq:differror}) and~(\ref{eq:interror}), result follows.
\end{proof}
\subsection{Stability}
\label{ssec:stability}
The stability of the proposed compact scheme is proved using von-Neumann stability analysis. Consider a single node
\begin{equation}
\label{eq:u_fou}
U_{n}^{m}=p^{m}e^{in\theta},
\end{equation}
where $i=\sqrt{-1}$, $p^{m}$ is the $m^{th}$ power of amplitude at time levels $\tau_{m}$, and $\theta= 2\pi/N$. The integral operator~(\ref{eq:all_operator}) can be re-written in an equivalent form as 
\[
\mathbb{I}u(x,\tau) = \lambda \int_{-L}^{L}u(y+x,\tau)g(y)dy.
\]
Fourth-order accurate composite Simpson's rule for above equation provides
\begin{equation*}
\begin{split}
\mathbb{I}_{\delta}u&=\delta x \sum_{k=0}^{N}w_{k}U_{k+n}^{m}g_{k},\\
&= \delta x \sum_{k=0}^{N}w_{k}p^{m}e^{i\theta(k+n)}g_{k},\\
&\equiv p^{m}e^{i\theta n}G_{k},
\end{split}
\end{equation*}
where
\begin{equation}
\label{eq:g_k}
G_{k}=\delta x\sum_{k=0}^{N}w_{k}e^{i\theta k}g_{k} \:\:\: \mbox{and} \:\:\: g_{k}=g(x_{k}).
\end{equation}
The following Lemma is proved for numerical quadrature $G_{k}$ given in Equation~(\ref{eq:g_k}).
\begin{lemma}
	\label{lemma:1}
	The numerical quadrature $G_{k}$  satisfies the following
	\[
	\lvert G_{k}\rvert \leq 1+ c \delta x^4,
	\]
	where $c$ is a constant.
\end{lemma}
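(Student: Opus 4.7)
The plan is to peel off the oscillatory factor $e^{i\theta k}$ by the triangle inequality, recognise the resulting quantity as the composite Simpson's quadrature of the density $g$ itself, and then use both the classical Simpson error estimate and the probability-density normalisation $\int_{\mathbb{R}} g = 1$ to close the bound.

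Concretely, I would first exploit that the composite Simpson weights $w_j\in\{1/3,\,2/3,\,4/3\}$ are non-negative and that $g_j = g(x_j)\geq 0$. The triangle inequality then yields
\begin{equation*}
|G_k| \;=\; \left| \delta x \sum_{j=0}^{N} w_j\, e^{i\theta j}\, g_j \right| \;\leq\; \delta x \sum_{j=0}^{N} w_j\, g_j ,
\end{equation*}
where I have renamed the summation index to $j$ to avoid clashing with the outer label $k$. The right-hand side is exactly the composite Simpson's rule approximation of $\int_{-L}^{L} g(y)\,dy$ on the mesh $\{x_0,\ldots,x_N\}$.

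Next, the classical quartic Simpson error estimate provides
\begin{equation*}
\left| \delta x \sum_{j=0}^{N} w_j\, g_j \;-\; \int_{-L}^{L} g(y)\,dy \right| \;\leq\; c\, \delta x^{4},
\end{equation*}
with $c$ depending only on a bound for the fourth derivative of $g$ on $[-L,L]$. Because $g$ is a probability density,
\begin{equation*}
\int_{-L}^{L} g(y)\,dy \;\leq\; \int_{\mathbb{R}} g(y)\,dy \;=\; 1 .
\end{equation*}
Chaining the three inequalities gives $|G_k|\leq 1 + c\,\delta x^{4}$, as required.

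The one technical subtlety, and essentially the only non-routine point, is justifying the Simpson error estimate for the Kou density~(\ref{eq:koudensity}), whose first derivative jumps at $x=0$. The standard fix, which I expect is implicit in the authors' discretisation, is to take $N$ even and align the mesh so that $x=0$ is a grid point, and then apply composite Simpson's rule separately on $[-L,0]$ and $[0,L]$, on each of which $g$ is $C^{\infty}$; both subintervals then contribute the required $O(\delta x^4)$ error. For the Merton density~(\ref{eq:mertondensity}), which is entire, no such splitting is needed. Everything else in the argument is immediate.
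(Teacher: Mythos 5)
Your proof is correct and follows essentially the same route as the paper's: the triangle inequality removes the factor $e^{i\theta k}$ (using the non-negativity of the Simpson weights and of $g$), the remaining sum is recognised as the composite Simpson approximation of $\int_{\Omega} g$, and the bound $\int_{\Omega} g \le 1$ together with the $O(\delta x^4)$ quadrature error gives the claim. Your additional remark about splitting the quadrature at $x=0$ for the Kou density is a genuine refinement that the paper's proof glosses over, but it does not change the argument.
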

\begin{proof}
	Using the property of a density function we can write
	\begin{equation}\label{eq:lemma11}
	\int_{\Omega}g(x)dx \leq \int_{-\infty}^{\infty}g(x)dx=1.
	\end{equation}
	The application of fourth-order accurate composite Simpson's rule in the above Equation~(\ref{eq:lemma11}) gives
	\begin{equation}\label{eq:lemma22}
	\delta x \sum_{k=0}^{N}w_{k}g_{k} \leq 1+ c \delta x^4.
	\end{equation}
	From Equations~(\ref{eq:g_k}) and~(\ref{eq:lemma22}), we have
	\begin{equation}
	\begin{split}
	\lvert G_{k}\rvert &= \lvert \delta x \sum_{k=0}^{N}w_{k}e^{i \theta k}g_{k} \rvert,\\
	&  \leq 1+ c \delta x^4.
	\end{split}
	\end{equation}
\end{proof}
For the sake of simplicity, we denote $\frac{\sigma^2}{2}=a$ and $\left(r-\frac{\sigma^2}{2}-\lambda \zeta\right)=b$ in the rest of this section. The fully discrete problem~(\ref{eq:corec3}) can be written in terms of $a$ and $b$ as follows
\begin{equation}
\label{eq:stab}
\begin{split}
(I-2 a \delta \tau \Delta_{x}^2)U^{m+1}_{n}&=(I+2 a \delta \tau 2 \Delta_{x}^2)U^{m-1}_{n}+2 \delta \tau\left[\frac{b}{2}-\frac{a}{2}\Delta_{x}\right]U_{x_{n}}^{m+1}\\
&+2 \delta \tau\left[\frac{b}{2}-\frac{a}{2}\Delta_{x}\right]U_{x_{n}}^{m-1}-2 \delta \tau(r+\lambda)U_{n}^{m}+2 \delta\tau\lambda G_{k}U^{m}_{n}.
\end{split}
\end{equation}
The following relations are obtained from \cite{KPMD17} in order to prove the stability of the proposed compact scheme~(\ref{eq:stab})
\begin{equation}
\begin{split}
\label{eq:stab1}
\Delta_{x}U_{n}^{m}&=i \frac{sin(\theta)}{\delta x}U^{m}_{n},\\
\Delta_{x}^{2}U_{n}^{m}&= \frac{2cos(\theta)-2}{\delta x^2}U^{m}_{n},\\
U_{x_{n}}^{m}&= i \frac{3sin(\theta)}{\delta x(2+cos(\theta))}U^{m}_{n}.
\end{split}
\end{equation}
Using Equation~(\ref{eq:stab1}) in the difference scheme~(\ref{eq:stab}), we get
\small
\begin{equation}
\begin{split}
\label{eq:stab4}
\left[1-4a\delta \tau \left(\frac{cos(\theta)-1}{\delta x^2}\right)\right]U^{m+1}_{n}
&=\left[1+4a\delta \tau \left(\frac{cos(\theta)-1}{\delta x^2}\right)\right]U^{m-1}_{n}+\delta \tau\left[\left(a\frac{sin(\theta)}{\delta x}+ib\right)\right.\\
&\left.\frac{3sin(\theta)}{\delta x(2+cos(\theta))}\right]U_{n}^{m+1}+\delta \tau\left[\left(a\frac{sin(\theta)}{\delta x}+ib\right)\right.\\
&\left.\frac{3sin(\theta)}{\delta x(2+cos(\theta))}\right]U_{n}^{m-1}-2 \delta \tau (r+\lambda)U_{n}^{m}+2\delta \tau \lambda G_{k} U_{n}^{m},
\end{split}
\end{equation}
\normalsize
After re-arranging the terms, the above Equation~(\ref{eq:stab4}) is written as
\small
\begin{equation}
\begin{split}
\label{eq:stab5}
\left[1-\delta \tau \left(a\frac{cos^{2}(\theta)+4cos(\theta)-5}{\delta x^2(2+cos(\theta))}+i b\frac{3sin(\theta)}{\delta x(2+cos(\theta))}\right)\right]U^{m+1}_{n}
&=\left[1+\delta \tau a\frac{cos^{2}(\theta)+4cos(\theta)-5}{\delta x^2(2+cos(\theta))}\right.\\
&\left.+i\delta \tau b\frac{3sin(\theta)}{\delta x(2+cos(\theta))}\right]U^{m-1}_{n}\\
&-2\delta \tau(r+\lambda) U_{n}^{m}+2\delta \tau \lambda G_{k} U_{n}^{m}.
\end{split}
\end{equation}
\normalsize
Using Equation~(\ref{eq:u_fou}) in above Equation~(\ref{eq:stab5}), the amplification polynomial $\Theta(\delta x, \delta \tau, \theta)$ can be written as
\begin{equation}
\label{eq:amplif}
\Theta(\delta x, \delta \tau, \theta)=\gamma_{0}p^2-2\gamma_{1}p-\gamma_{2},
\end{equation}
where
\begin{equation}
\label{eq:gammas}
\begin{split}
\gamma_{0}
& = \left[1-\delta \tau
\left(a\frac{cos^{2}(\theta)+4cos(\theta)-5}{\delta x^2(2+cos(\theta))}+ib\frac{3sin(\theta)}{\delta x(2+cos(\theta))}\right)\right],\\
\gamma_{1}
& = \left[\lambda \delta \tau G_{k}-\delta \tau(r+\lambda)\right],\\
\gamma_{2}
& = \left[1+\delta \tau
\left(a\frac{cos^{2}(\theta)+4cos(\theta)-5}{\delta x^2(2+cos(\theta))}+ib\frac{3sin(\theta)}{\delta x(2+cos(\theta))}\right)\right].\\
\end{split}
\end{equation}
The following lemma is needed to prove the stability of a three-time levels difference scheme, see \cite{Jcstrik04}.
\begin{lemma}
	\label{lemma:2}
	A finite difference scheme is stable if and only if all the roots, $p_{u}$, of the amplification polynomial $\Theta$ satisfies the following condition:\\
	$1$. There is a constant $C$ such that $|p_{u}| \leq
	1+C\delta \tau$.\\
	$2$. There are positive constants $a_{0}$ and $a_{1}$ such that if $a_{0} <|p_{u}|\leq 1+C\delta \tau$ then $|p_{u}|$ is simple root and for any other root $p_{v}$, following relation holds
	\[
	\left|p_{v}-p_{u}\right|\geq a_{1},
	\]
	as $\delta x$, $\delta \tau$$\rightarrow0$.
\end{lemma}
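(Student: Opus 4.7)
The plan is to reduce the three-level scheme to a one-step iteration on the augmented state $\mathbf{V}^m := (U^{m+1}_{n}, U^{m}_{n})^T$ and apply the standard matrix-stability machinery (essentially the Kreiss matrix theorem specialized to $2 \times 2$ companion matrices). Rewriting the recursion from the amplification relation gives $\mathbf{V}^{m+1} = A\,\mathbf{V}^m$ for a companion matrix $A$ whose characteristic polynomial is exactly the amplification polynomial $\Theta(p) = \gamma_0 p^2 - 2\gamma_1 p - \gamma_2$. Stability of the original scheme is equivalent to the family $\{A^m\}_{m \delta\tau \le T}$ being bounded uniformly in $\delta x, \delta \tau$, up to an exponential factor $e^{CT}$.

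Since $A$ is $2 \times 2$, either it is diagonalisable as $A = T J T^{-1}$ with $J = \mathrm{diag}(p_u, p_v)$, or it has a single $2 \times 2$ Jordan block at a repeated root $p$. In the diagonalisable case, the eigenvectors of the companion form are $(p_u,1)^T$ and $(p_v,1)^T$, and a direct computation shows $\|T^{-1}\| \sim 1/|p_u - p_v|$. In the Jordan case, $\|A^m\| \sim m\,|p|^{m-1}$. These two facts drive both directions of the lemma.

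For necessity: if some root violates condition~1, then $|p_u|^m$ already exceeds any bound of the form $e^{CT}$ when $m = T/\delta \tau$, ruling out stability. If condition~2 fails, either there is a repeated root of modulus $>a_0$, giving Jordan-block growth $m|p|^{m-1} \to \infty$, or there are two distinct roots with $|p_u - p_v| \to 0$ while both have modulus $>a_0$, so $\|T^{-1}\|$ blows up and $\|A^m\|$ cannot remain bounded. For sufficiency: condition~1 bounds each eigenvalue's contribution by $e^{CT}$; under condition~2, any root is either safely inside $|p| \le a_0$ (where exponential decay absorbs any polynomial Jordan factor) or is simple with separation $\ge a_1$ from the other root, giving a uniform bound on $\|T^{-1}\|$ and hence on $\|A^m\|$.

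The main obstacle I expect is justifying the sufficiency uniformly as $\delta x, \delta \tau \to 0$: one has to rule out that a root sitting near the transition $|p| \approx a_0$ secretly produces a nearly-degenerate Jordan structure that spoils the norm bound. This is exactly where the quantitative constants $a_0$ and $a_1$ in condition~2 are tailored: $a_0$ must be small enough that $|p|^m$ dominates any polynomial factor coming from a near-Jordan structure, while $a_1$ caps $\|T^{-1}\|$ uniformly in the small parameters. This quantitative interplay is the content of the classical criterion of \cite{Jcstrik04}, and in the present two-level reformulation it reduces to a careful $2 \times 2$ spectral argument rather than the full Kreiss matrix theorem.
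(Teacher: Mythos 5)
The paper does not actually prove this lemma: it is quoted from Strikwerda and the ``proof'' in the text is a pointer to \cite{Jcstrik04}. Your companion-matrix argument is precisely the argument in that reference, so in substance you are reconstructing the intended proof rather than finding a new route: reduce to the one-step recursion $\mathbf{V}^{m+1}=A\,\mathbf{V}^{m}$ with $A$ the $2\times 2$ companion matrix of $\Theta$, and control $\|A^{m}\|$ through the eigenstructure, with condition 1 governing the spectral radius, condition 2 governing the conditioning of the diagonalisation, and the threshold $a_{0}<1$ absorbing Jordan-type polynomial growth via $\sup_{m}m\,a_{0}^{m-1}<\infty$. The sufficiency half of your sketch is sound, and your worry about roots near $|p|\approx a_{0}$ is resolved exactly as you suspect: a root of modulus at most $a_{0}$ contributes at most $\sup_{m}m\,a_{0}^{m-1}$ even in the degenerate case, while a root of modulus greater than $a_{0}$ is by hypothesis simple and $a_{1}$-separated, so $\|T\|\,\|T^{-1}\|\leq C(1+C\delta\tau)/a_{1}$ uniformly in $\delta x,\delta\tau$.

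The one genuine gap is in your necessity argument for condition 2. From ``two distinct roots with $|p_{u}-p_{v}|\to 0$ while both have modulus $>a_{0}$'' you conclude that ``$\|T^{-1}\|$ blows up and $\|A^{m}\|$ cannot remain bounded.'' That inference is not valid as stated: the blow-up of $\|T\|\,\|T^{-1}\|$ only destroys your \emph{upper} bound on $\|A^{m}\|$; it does not by itself yield a lower bound. To make necessity rigorous you must exhibit growth directly, for instance by noting that the entries of the $m$-th power of the companion matrix are divided differences $(p_{u}^{m}-p_{v}^{m})/(p_{u}-p_{v})$ of the root powers, which for nearly coincident roots of modulus close to $1$ behave like $m\,p^{m-1}$ and hence are unbounded as $\delta\tau\to 0$ since $m$ ranges up to $T/\delta\tau$. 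You also need to negate the statement with its quantifiers intact: condition 2 asserts the \emph{existence} of $(a_{0},a_{1})$, so instability must be derived from its failure for every such pair; letting $a_{0}\uparrow 1$ and $a_{1}\downarrow 0$ produces a sequence along which two roots coalesce at a modulus tending to at least $1$, and it is this limiting modulus (not merely $>a_{0}$ for one fixed $a_{0}<1$) that makes $m|p|^{m-1}$ blow up. With those two repairs your argument closes and coincides with the classical proof in \cite{Jcstrik04}.
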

\begin{proof}
	For the proof of above Lemma, see \cite{Jcstrik04}.
\end{proof}
Now, we prove the above Lemma~\ref{lemma:2} for the proposed three-time levels compact scheme for the PIDE in the following theorem.
\begin{theorem}
	The fully discrete problem~(\ref{eq:corec3}) is stable in the sense of von-Neumann for $\delta \tau$ $\leq$ $1/(4\lambda+2r)$.
\end{theorem}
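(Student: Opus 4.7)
The plan is to verify the two hypotheses of Lemma~\ref{lemma:2} for the roots of the amplification polynomial $\Theta(\delta x,\delta\tau,\theta)=\gamma_0 p^{2}-2\gamma_1 p-\gamma_2$. Applying the quadratic formula gives
\[
p_{\pm}=\frac{\gamma_1\pm\sqrt{\gamma_1^{2}+\gamma_0\gamma_2}}{\gamma_0}.
\]
The key algebraic fact I would rely on is the factorisation $\cos^{2}\theta+4\cos\theta-5=(\cos\theta-1)(\cos\theta+5)\le 0$, so that on writing $\gamma_0=1-\delta\tau(\alpha+i\beta)$ and $\gamma_2=1+\delta\tau(\alpha+i\beta)$ one has $\alpha\le 0$. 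A direct computation then yields $|\gamma_0|^{2}=(1-\delta\tau\alpha)^{2}+(\delta\tau\beta)^{2}\ge 1$ and $|\gamma_2|\le|\gamma_0|$, while Lemma~\ref{lemma:1} bounds the jump part as $|\gamma_1|\le\delta\tau\bigl(\lambda(1+c\delta x^{4})+r+\lambda\bigr)\le\delta\tau(2\lambda+r)+O(\delta\tau\,\delta x^{4})$.

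For condition $1$ of Lemma~\ref{lemma:2} I would combine the triangle inequality with the two bounds $|\gamma_0|\ge 1$ and $|\gamma_2|/|\gamma_0|\le 1$, and then apply $\sqrt{x^{2}+1}\le 1+x$ for $x\ge 0$, to deduce
\[
|p_{\pm}|\le\frac{|\gamma_1|+\sqrt{|\gamma_1|^{2}+|\gamma_0|^{2}}}{|\gamma_0|}\le 1+2|\gamma_1|\le 1+\delta\tau(4\lambda+2r)+O(\delta\tau\,\delta x^{4}).
\]
This is precisely the required $1+C\delta\tau$ growth, and the hypothesis $\delta\tau\le 1/(4\lambda+2r)$ enters here to keep $2|\gamma_1|\le 1+O(\delta x^{4})$, so that $|p_{\pm}|^{m}\le(1+C\delta\tau)^{m}\le e^{CT}$ is bounded uniformly on any time interval $[0,T]$ with an explicit constant $C$.

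For condition $2$ I would study the discriminant $\Delta:=\gamma_1^{2}+\gamma_0\gamma_2=1-\delta\tau^{2}(\alpha+i\beta)^{2}+\gamma_1^{2}$ together with the separation $|p_{+}-p_{-}|=2\sqrt{|\Delta|}/|\gamma_0|$. In the regime where $|\gamma_0|$ is of order unity (moderate $\delta x$) one has $\Delta\approx 1$, so both $|p_{\pm}|$ are of order one and their separation is bounded below by a positive constant. The hard part, and where I expect the main obstacle, is the regime $\delta x\to 0$ in which $\alpha\sim-1/\delta x^{2}$ and $\beta\sim 1/\delta x$ make $|\gamma_0|$ and $\sqrt{|\Delta|}$ simultaneously large but of comparable order: one has to check that any root whose modulus exceeds a fixed threshold $a_0$ must arise from the balanced sub-regime in which $|\gamma_2|/|\gamma_0|$ stays bounded away from $0$, thereby producing a uniform separation $a_1>0$. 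Once this dichotomy is under control, both clauses of Lemma~\ref{lemma:2} hold and von~Neumann stability follows under the stated restriction $\delta\tau\le 1/(4\lambda+2r)$.
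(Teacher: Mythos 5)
Your treatment of condition 1 of Lemma~\ref{lemma:2} is sound and essentially matches the paper: you correctly take the roots as $p_{\pm}=(\gamma_1\pm\sqrt{\gamma_1^{2}+\gamma_0\gamma_2})/\gamma_0$ (the paper actually has a sign typo here), and the chain $|\gamma_0|\geq 1$, $|\gamma_2|\leq|\gamma_0|$, $|\gamma_1|\leq\delta\tau(2\lambda+r)$ gives $|p_{\pm}|\leq 1+2\delta\tau(2\lambda+r)$ exactly as in the paper, with $C=2(2\lambda+r)$. One small misplacement: the hypothesis $\delta\tau\leq 1/(4\lambda+2r)$ is not needed for condition 1 --- that condition only asks for $|p|\leq 1+C\delta\tau$ with some constant $C$, not $C\delta\tau\leq 1$. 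The restriction is consumed entirely by condition 2.

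For condition 2 you have a genuine gap: you reduce the root separation to the discriminant via $|p_{+}-p_{-}|=2\sqrt{|\Delta|}/|\gamma_0|$ and then correctly anticipate that controlling this uniformly as $\delta x\to 0$ (where $\alpha\sim -1/\delta x^{2}$ and $\beta\sim 1/\delta x$ blow up) is delicate --- and you leave that regime unresolved. The paper sidesteps the discriminant entirely by using Vieta's formula for the \emph{sum} of the roots, $p_{1}+p_{2}=2\gamma_1/\gamma_0$, whose modulus is at most $2|\gamma_1|\leq 2\delta\tau(2\lambda+r)$ uniformly in $\theta$ and $\delta x$. Then, taking $a_{0}=1$ so that condition 2 only needs to be verified when $|p_{1}|>1$, the reverse triangle inequality gives
\begin{equation*}
|p_{1}-p_{2}|=|2p_{1}-(p_{1}+p_{2})|\geq 2|p_{1}|-|p_{1}+p_{2}|\geq 2-2\delta\tau(2\lambda+r)\geq 1
\end{equation*}
precisely when $\delta\tau\leq 1/(4\lambda+2r)$, which yields $a_{1}=1$. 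This is the missing idea in your proposal: the separation estimate should be run through the sum of the roots (which is small and $\theta$-uniform) rather than through the product or the discriminant (which are not). If you insist on the discriminant route, you would still need an argument of this flavour to handle the unbounded-coefficient regime, so the proof as proposed is incomplete.
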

\begin{proof}
	Firstly, some properties of the coefficients $\gamma_{0}, \gamma_{1}$ and $\gamma_{2}$ of amplification polynomial $\Theta(\delta x, \delta \tau, \theta)$ are proved.
	Using Lemma~\ref{lemma:1} in Equation~(\ref{eq:gammas}), it is observed that
	\[
	\lvert \gamma_{1} \rvert < \delta \tau(2\lambda+r).
	\]
	Further, the coefficient $\gamma_{0}$ from Equation~(\ref{eq:gammas}) can be written as
	\begin{equation}
	\label{eq:stab8}
	\lvert \gamma_{0} \rvert=\lvert (1-A)-iB \rvert,
	\end{equation}
	where
	\[
	A=a\frac{cos^{2}(\theta)+4cos(\theta)-5}{\delta x^2(2+cos(\theta))},\:\:\: \mbox{and} \:\:\:B=b\frac{3sin(\theta)}{\delta x(2+cos(\theta))}.
	\]
	Since $a > 0$ $\implies$ $A < 0$, which gives $\lvert \gamma_{0} \rvert > 1 $. Moreover, from Equations~(\ref{eq:gammas})
	and~(\ref{eq:stab8}), we have
	\[
	\left \lvert \frac{\gamma_{2}}{\gamma_{0}} \right \rvert^{2}=\frac{1+A^2+2A+B^2}{1+A^2-2A+B^2},
	\]
	which implies $\left \lvert \frac{\gamma_{2}} {\gamma_{0}} \right \rvert < 1$. Now, roots of the amplification polynomial $\Theta(\delta x, \delta \tau, \theta)$ are
	\begin{equation}
	\begin{split}
	\lvert p\rvert & = \left \lvert\frac{\gamma_{1}\pm \sqrt{\gamma_{1}^2-\gamma_{0}\gamma_{2}}}{\gamma_{0}}\right \rvert, \\
	& \leq \left \lvert\frac{\gamma_{2}}{\gamma_{0}}\right \rvert^{\frac{1}{2}}+ 2\left \lvert\frac{\gamma_{1}}{\gamma_{0}}\right\rvert,\\
	& \leq 1+2\delta \tau (r+2\lambda).
	\end{split}
	\end{equation}
	Hence, first part of the Lemma~\ref{lemma:2} is proved for constant $C=2(r+2\lambda)$. Let us assume
	that $p_{1}$ and $p_{2}$ are two roots of amplification polynomial $\Theta(\delta x, \delta \tau, \theta)$ and the constant $a_{0}=1$ which implies $p_{1}>1$, then
	\begin{equation}
	\begin{split}
	|p_{1}-p_{2}| & \geq 2|p_{1}|-|p_{1}+p_{2}|,\\
	& \geq 2-2\delta \tau(2\lambda+r).
	\end{split}
	\end{equation}
	If $\delta \tau$ satisfies the given condition, we have
	\[
	|p_{1}-p_{2}|\geq 1,
	\]
	and this prove the second part of the Lemma~\ref{lemma:2} with $a_{1}=1$. This completes the proof.
\end{proof}
\section{Numerical Results}
\label{sec:numerical}
\par In this section, the applicability of the proposed compact scheme for pricing European options under jump-diffusion models is demonstrated. According to \cite{thomee70}, fourth-order convergence cannot be expected for non-smooth initial conditions. Since the initial conditions given in Equations~(\ref{eq:initial_call}) and~(\ref{eq:initial_put}) have low regularity, therefore suitable smoothing operator is required to smoothen the initial conditions. For this purpose, the smoothing operator $\phi_4$ given in \cite{thomee70} is employed to smoothen the initial conditions and it's Fourier transform is define as
\[
\hat{\phi}_{4}(\omega)=\left(\frac{sin(\omega/2)}{\omega/2}\right)^4\left[1+\frac{2}{3}sin^2(\omega/2)\right].
\]
As a result, the following smoothed initial condition $(\tilde{u}_{0})$ is obtained
\begin{equation}
\label{eq:smoothed}
\tilde{u}_{0}(x_1)=\frac{1}{\delta x}\int_{-3\delta x}^{3\delta x}\phi_{4}\left(\frac{x}{\delta x}\right)u_{0}(x_1-x)dx,
\end{equation}
where $u_{0}$ is the actual non-smooth initial condition and $x_{1}$ is the grid point where smoothing is required. The smoothed initial conditions obtained from Equation~(\ref{eq:smoothed}) tends to the original initial conditions as $\delta x\rightarrow 0$. The parameters considered for pricing European options under Merton and Kou jump-diffusion models are listed in Table~\ref{table:parameter}. The parabolic mesh ratio $(\frac{\delta \tau}{\delta x^2})$ is fixed as $0.4$ in all our computations, although neither the von Neumann stability analysis nor the numerical experiments showed any such restriction. The $\ell^2$ error $\|U(\delta x, \delta \tau)-U(\frac{\delta x}{2}, \frac{\delta \tau}{2})\|_{\ell^2}$ is used to examine the numerical convergence rate of the proposed compact scheme, where $U(\delta x, \delta \tau)$ represents the solution at the step sizes $\delta x$ and $\delta \tau$ and $U\left(\frac{\delta x}{2}, \frac{\delta \tau}{2}\right)$ represents the solution after halving the step sizes.
\begin{table}[h!]
	\begin{center}
		\begin{tabular}{ P{0.5 cm} P{4.5 cm} | P{4 cm} P{1 cm} }
			\hline
			& Merton jump-diffusion  model & Kou jump-diffusion  model & \\
			\hline
			Parameters & Values & Parameters & Values  \\
			\hline
			$\lambda$ & $0.10$ & $\lambda$ & $0.10$  \\
			$T$ & $0.25$ & $T$ & $0.25$ \\
			$r$ & $0.05$ & $r$ & $0.05$  \\
			$K$ & $100$ & $K$ & $100$  \\
			$\sigma$ & $0.15$ & $\sigma$ & $0.15$  \\
			$\mu_{J}$ & $-0.90$ & $\lambda_{+}$ & $3.0465$  \\
			$\sigma_{J}$ & $0.45$ & $\lambda_{-}$ & $3.0775$  \\
			& & $p$ & $0.3445$  \\
			\hline
		\end{tabular}
	\end{center}
	\caption{The values of parameters for pricing European options under Merton and Kou jump-diffusion models.}
	\label{table:parameter}
\end{table}
\begin{table}[h!]
	\begin{tabular}{ m{4cm} | m{2.3cm} | m{2.3cm}| m{2.3cm} }
		\hline
		& S=90 & S=100 & S=110 \\
		\hline
		Reference values & 9.285418 & 3.149025 & 1.401185 \\
		\hline
		Proposed compact scheme & 9.285420 & 3.149114 & 1.401176 \\
		\hline
	\end{tabular}
	\caption{Values of European put options using proposed three-time levels compact scheme under Merton jump-diffusion model with $N=1536$ for different stock prices.}
	\label{table:merton}
\end{table}
\begin{figure}[h!]
	\begin{center}
		\subfigure[]{%
			\includegraphics[scale=0.420]{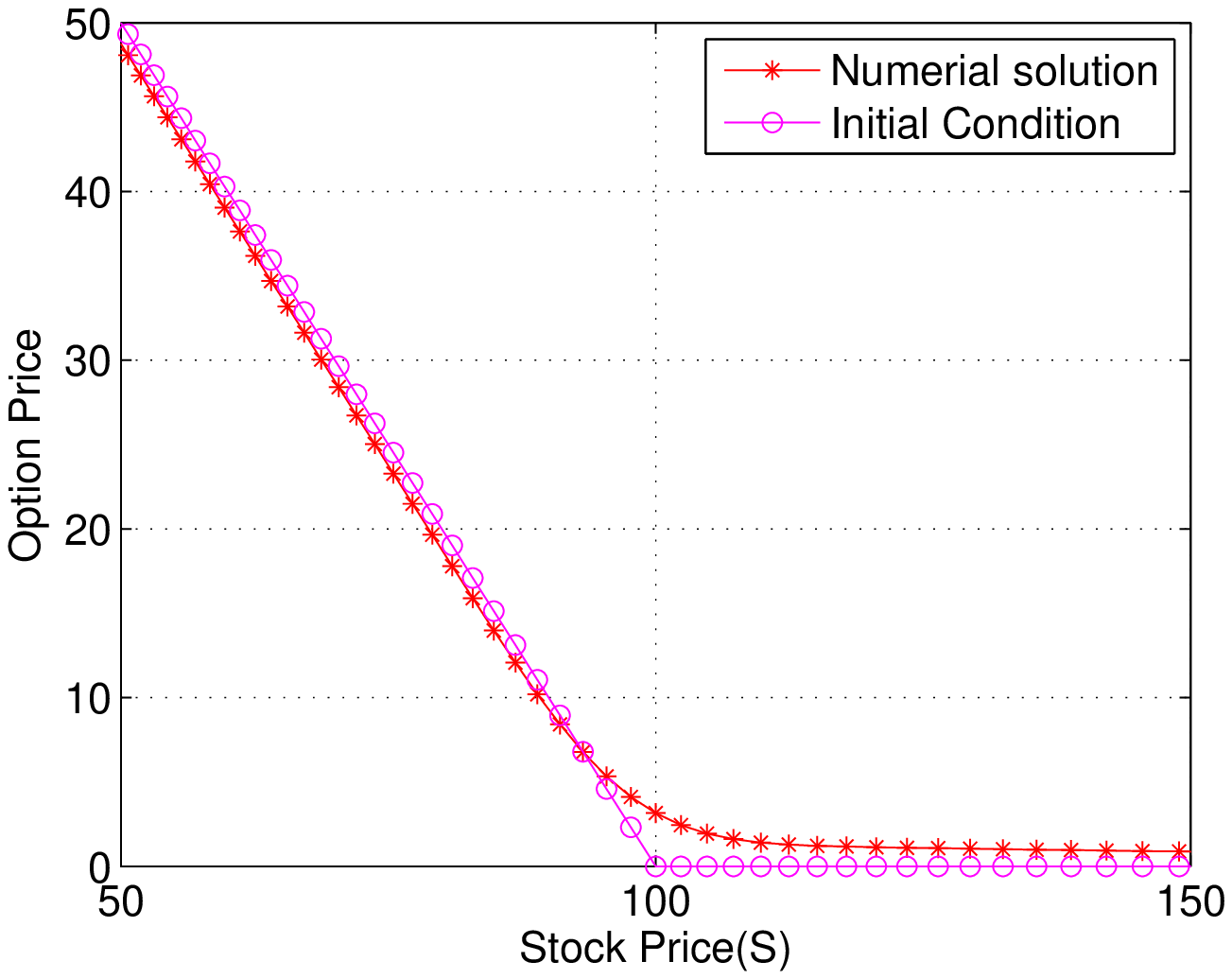}
			\label{fig:price_merton}}%
		\subfigure[]{%
			\includegraphics[scale=0.420]{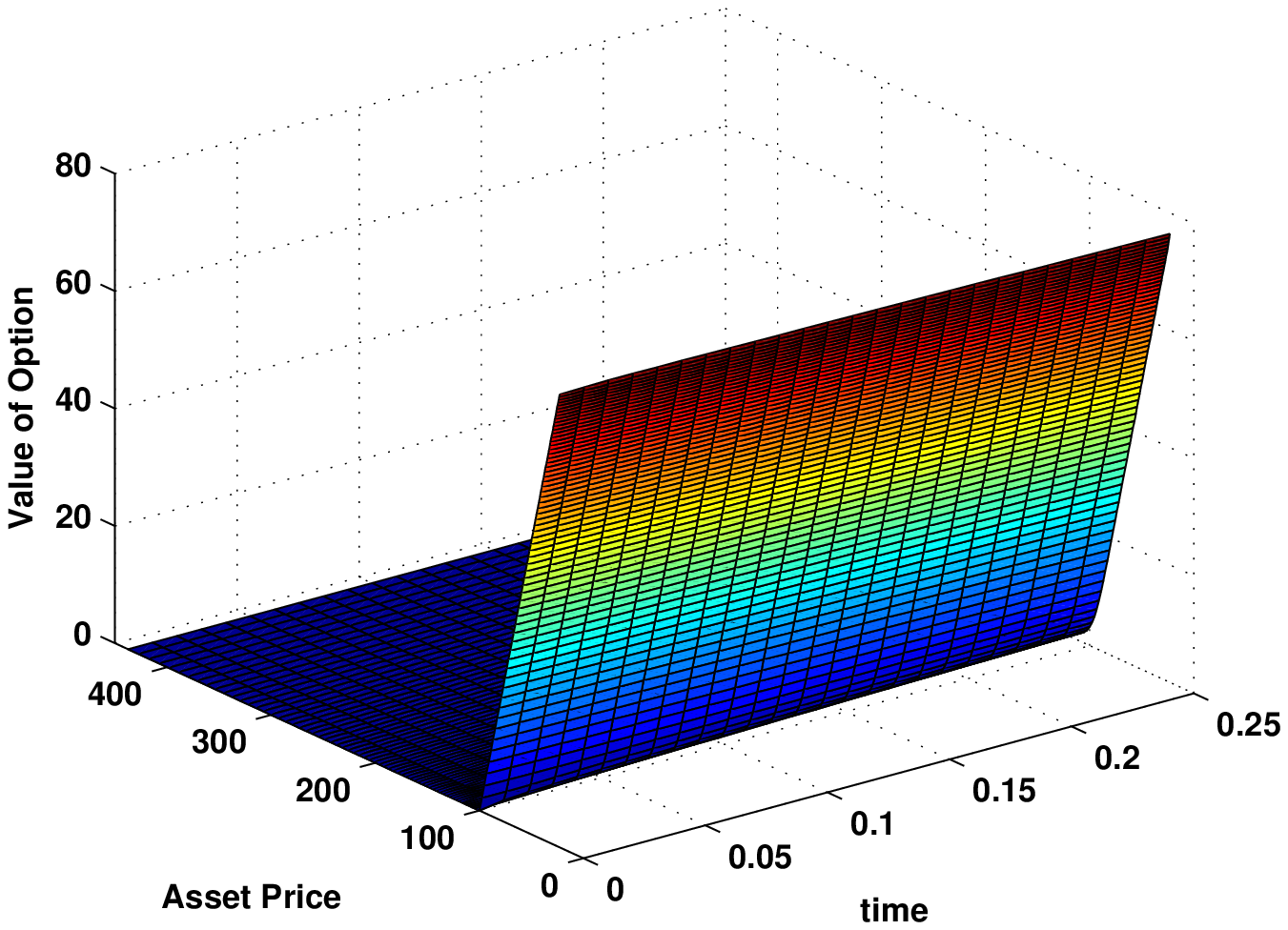}
			\label{fig:merton_price3d}}
		\caption{Values of European put options using proposed three-time levels compact scheme under Merton jump-diffusion model: (a) As a function of stock price, (b) As a function of stock price and time.}
		\label{fig:merton}
	\end{center}
\end{figure}
\begin{figure}[h!]
	\centering
	\includegraphics[width=8 cm]{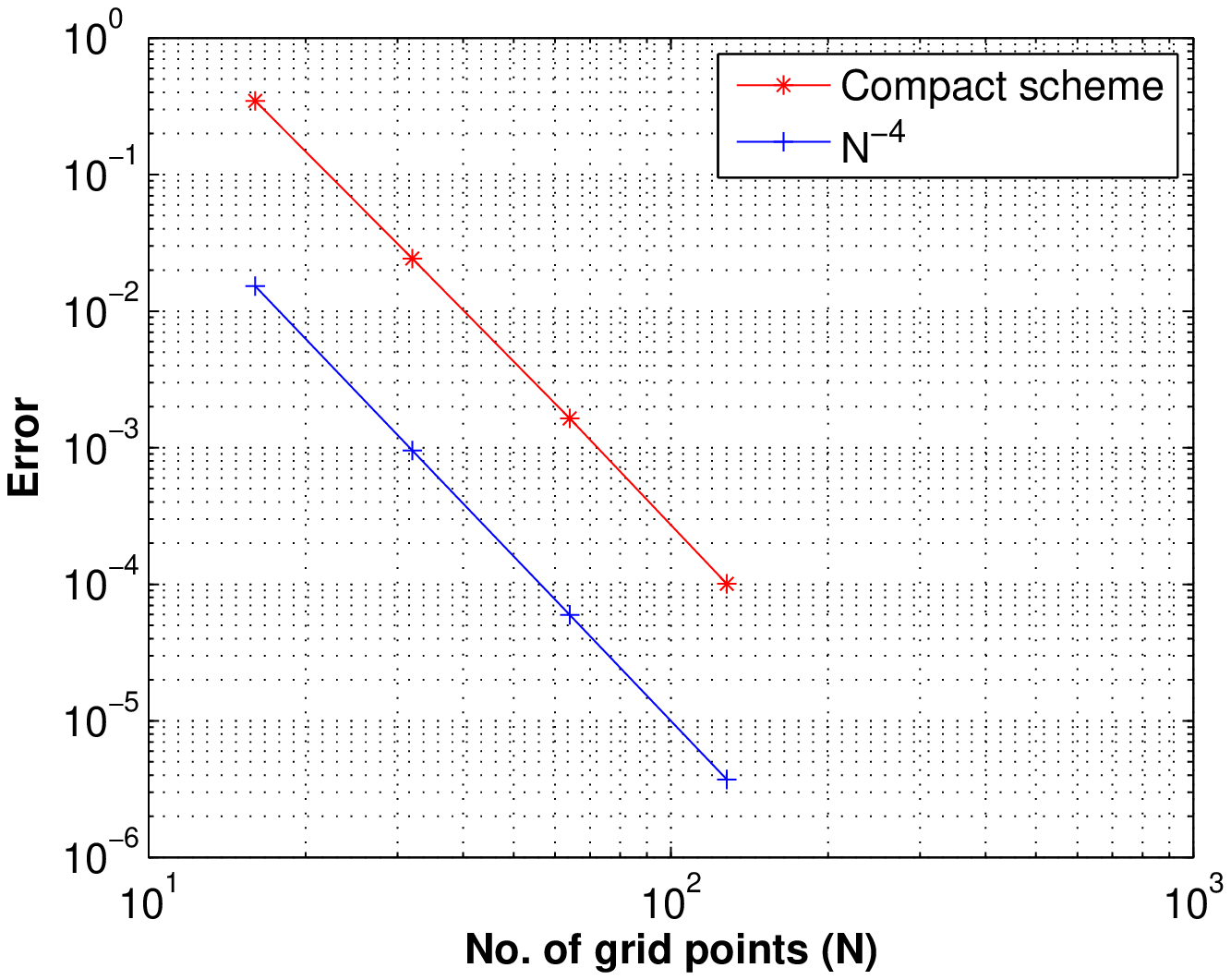}
	\caption{Rate of convergence: Error between the numerical solutions versus number of grid points for European put option under Merton jump-diffusion model.}
	\label{fig:roc_mertonp}
\end{figure}
\begin{figure}[h!]
	\begin{center}
		\subfigure[]{%
			\includegraphics[scale=0.420]{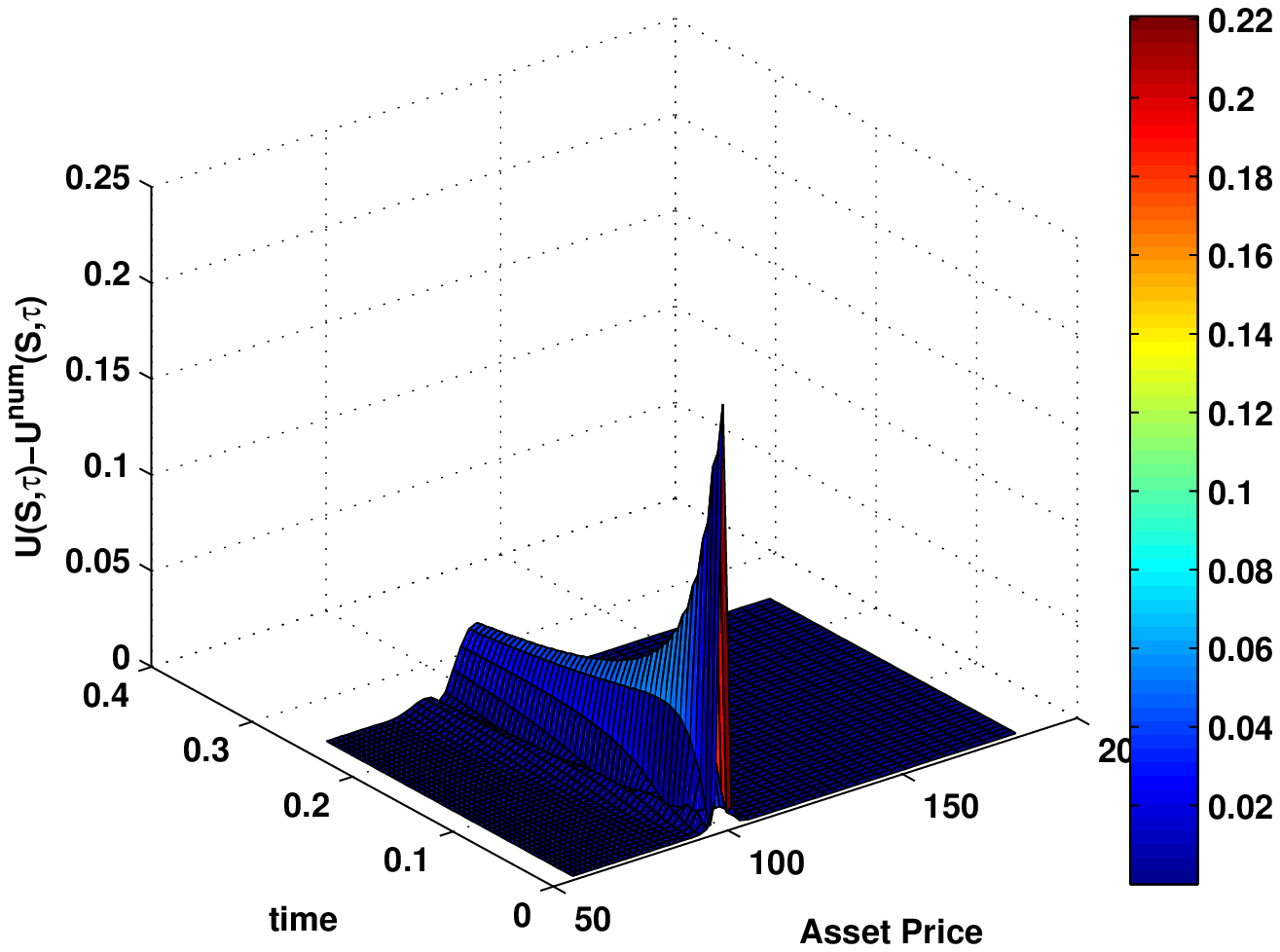}
			\label{fig:err_woutsmo_finite}}%
		\subfigure[]{%
			\includegraphics[scale=0.420]{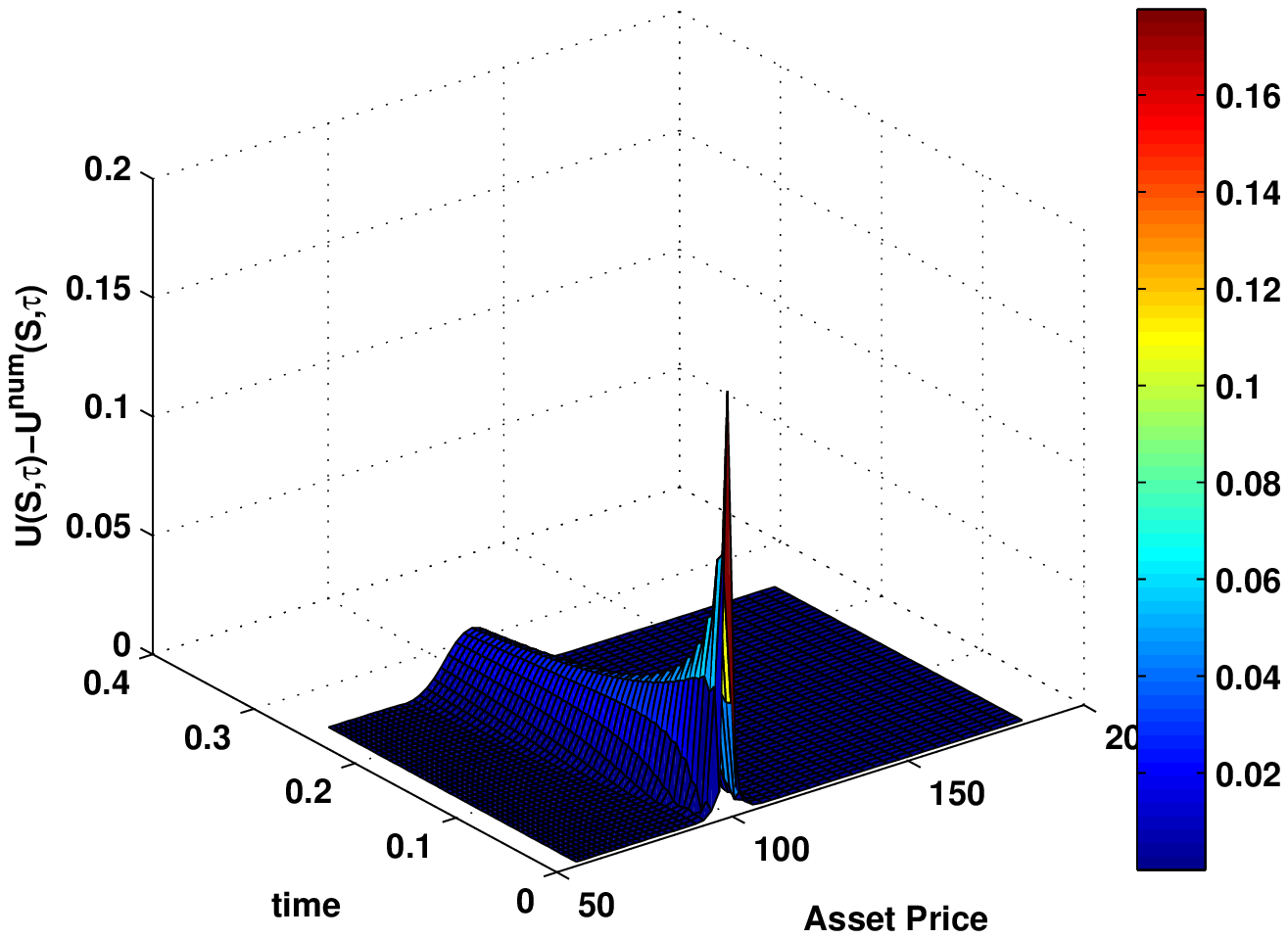}
			\label{fig:err_woutsmo_compact}}
		\subfigure[]{%
			\includegraphics[scale=0.420]{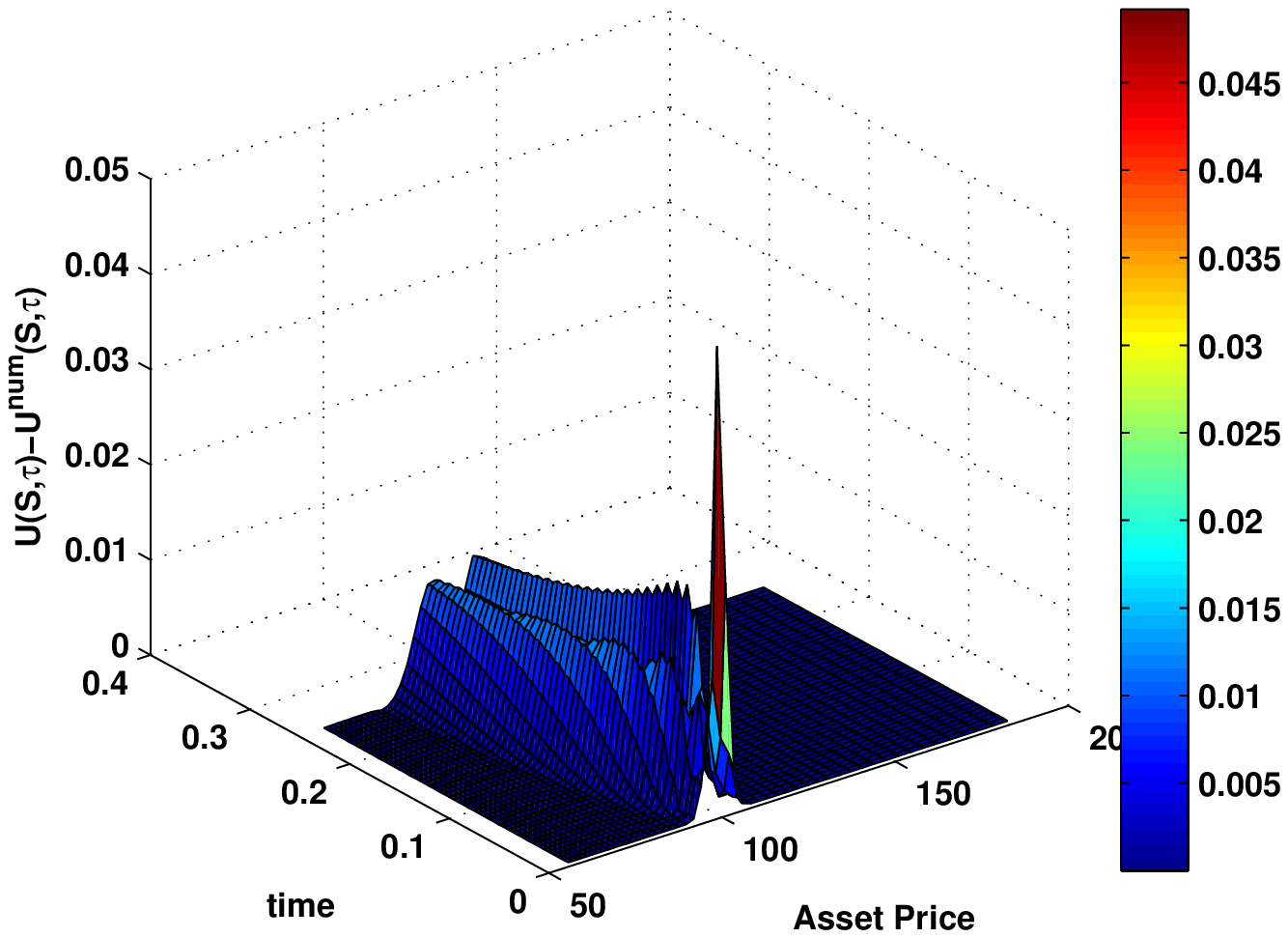}
			\label{fig:err_smo_finite}}%
		\subfigure[]{%
			\includegraphics[scale=0.420]{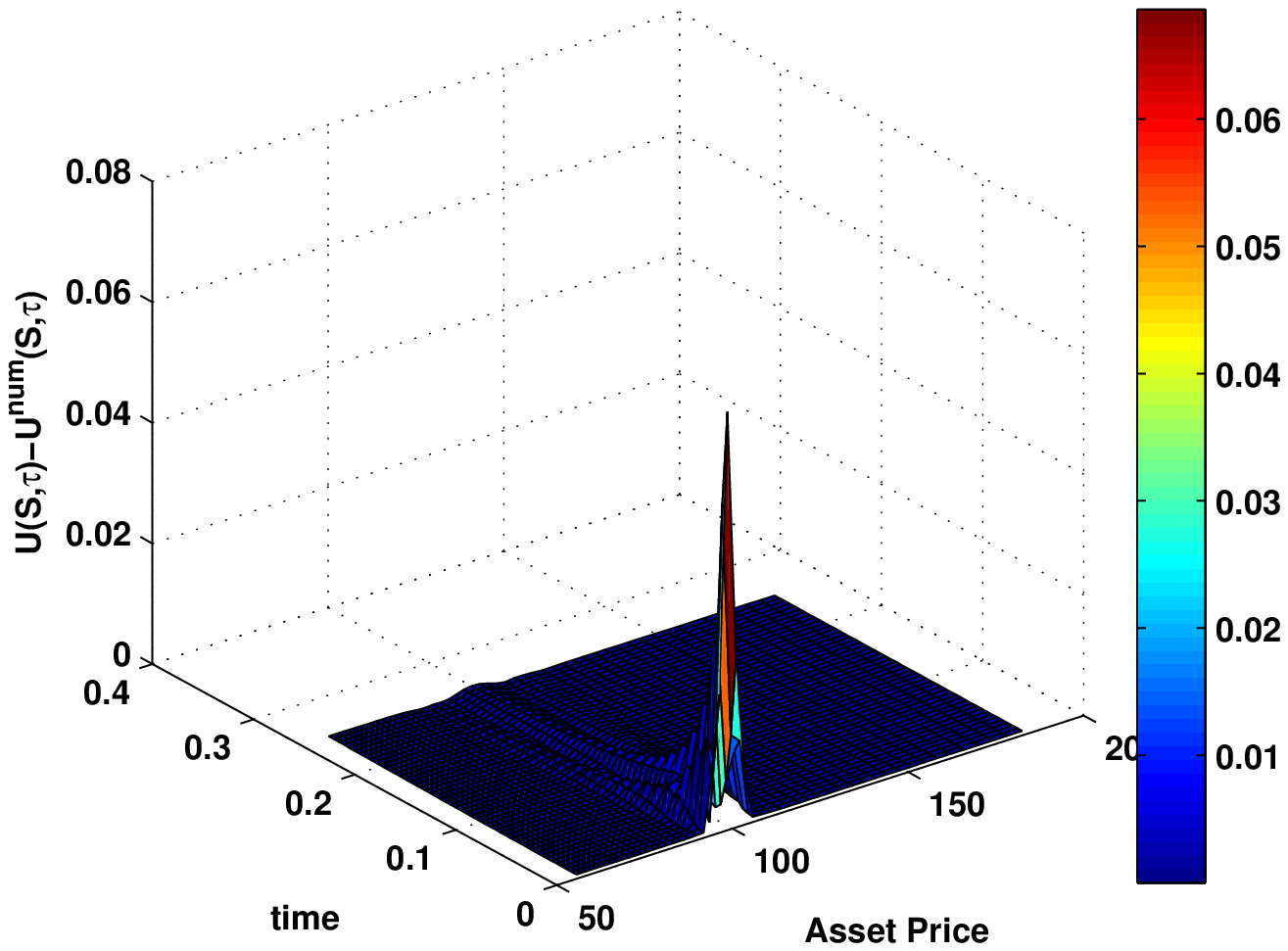}
			\label{fig:err_smo_compact}}
		\caption{The difference between the numerical and analytic solution for European put options under Merton jump-diffusion model as a
			function of stock price and time: (a) Using finite difference scheme with non-smooth initial condition, (b) Using proposed compact scheme with non-smoothing initial condition, (c) Using finite difference scheme with smoothed initial condition, and (d) Using proposed compact scheme with smoothed initial condition.}
		\label{fig:err_smo}
	\end{center}
\end{figure}
\begin{figure}[h!]
	\centering
	\includegraphics[width=8 cm]{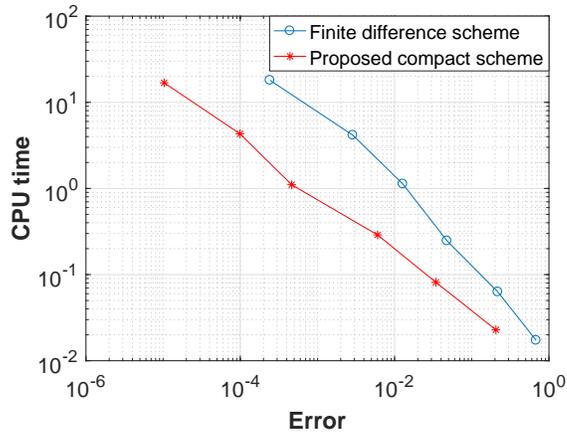}
	\caption{Efficiency: CPU time in seconds versus error using finite difference scheme and proposed compact scheme for pricing European put options under Merton jump-diffusion model.}
	\label{fig:compu}
\end{figure}
\begin{table}[h!]
	\begin{tabular}{ m{4cm} | m{2.3cm} | m{2.3cm}| m{2.3cm} }
		\hline
		& S=90 & S=100 & S=110 \\
		\hline
		Reference values & 0.527638 & 4.391246 & 12.643406 \\
		\hline
		Proposed compact scheme & 0.527636 & 4.391244 & 12.643408 \\
		\hline
	\end{tabular}
	\caption{Values of European call options using proposed three-time levels compact scheme under Merton jump-diffusion model with $N=1536$ for different stock prices.}
	\label{table:mertonc}
\end{table}
\begin{figure}[h!]
	\centering
	\includegraphics[width=8 cm]{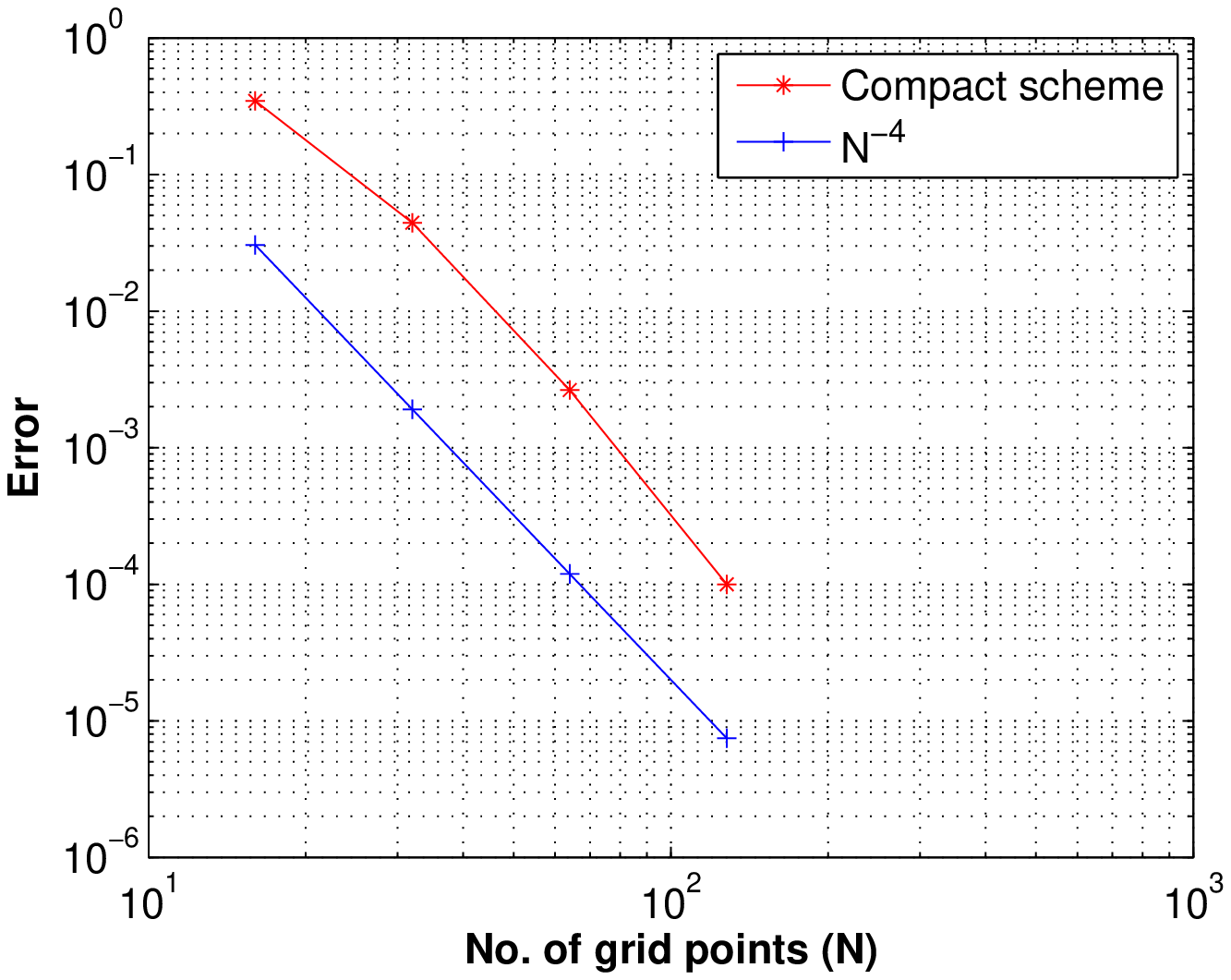}
	\caption{Rate of convergence: Error between the numerical solutions versus number of grid points for European call option under Merton jump-diffusion model.}
	\label{fig:roc_mertonc}
\end{figure}
\begin{figure}[h!]
	\begin{center}
		\subfigure[]{%
			\includegraphics[scale=0.420]{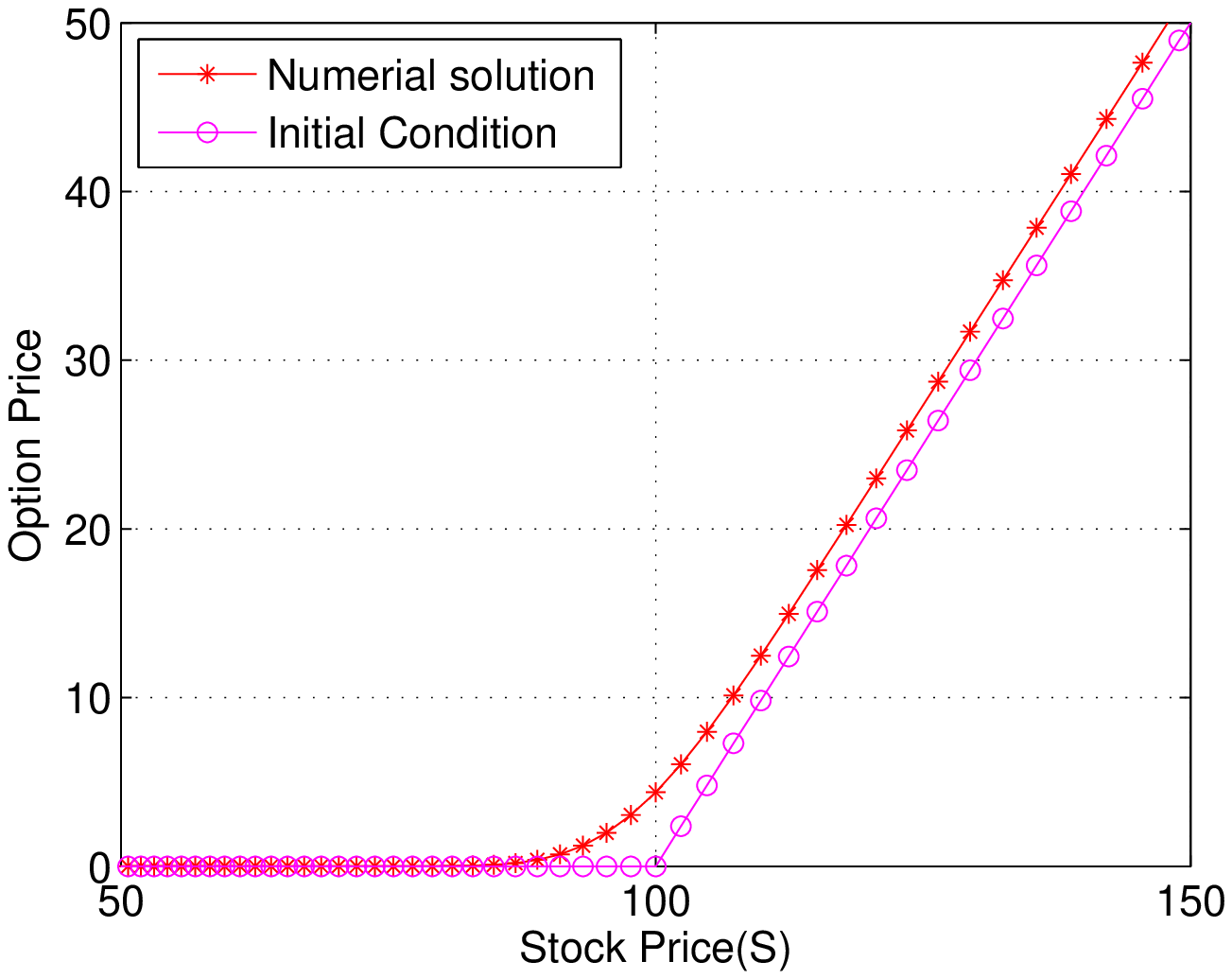}
			\label{fig:price_mertonc}}%
		\subfigure[]{%
			\includegraphics[scale=0.420]{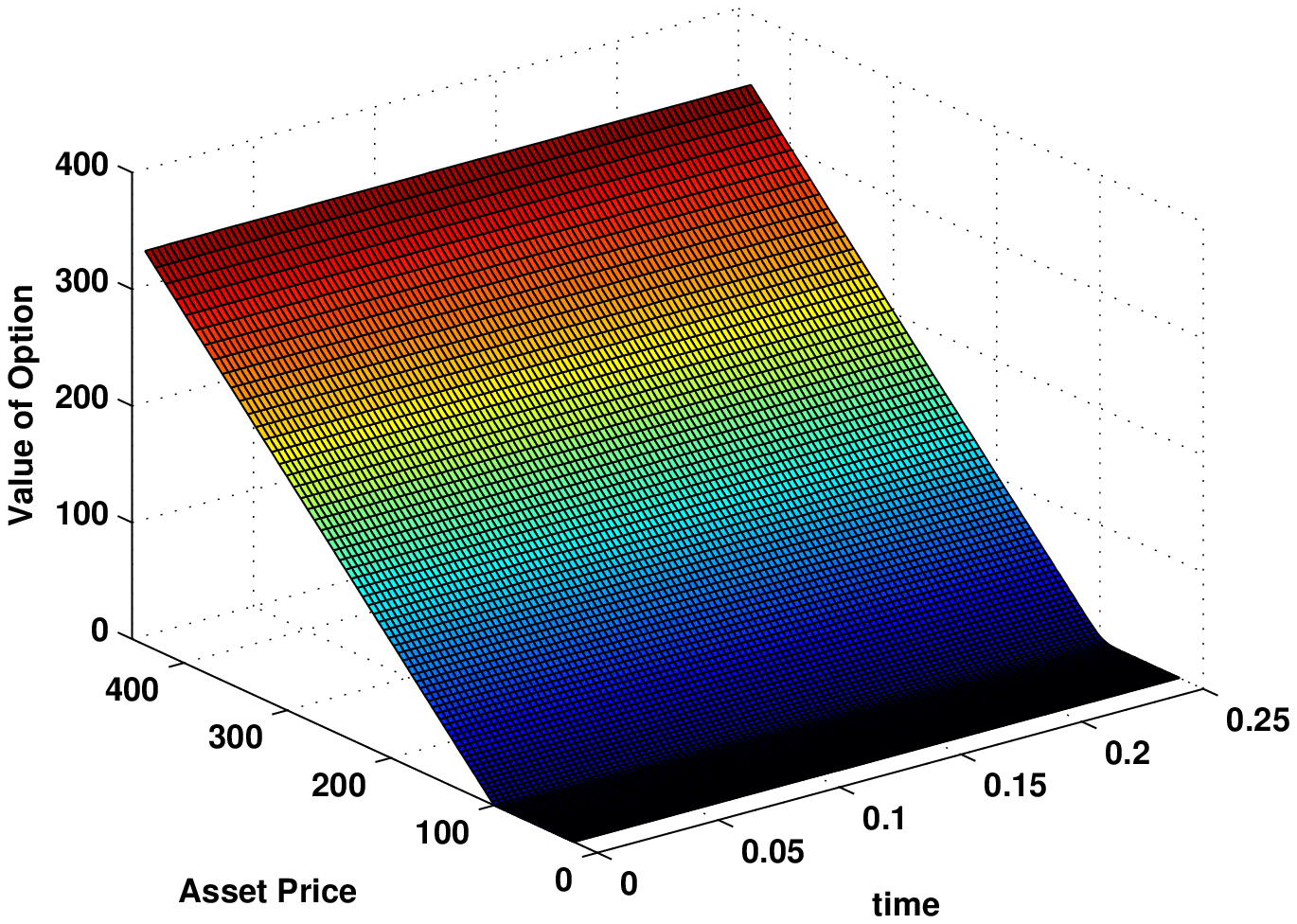}
			\label{fig:mertonc_price3d}}
		\caption{Values of European call options using proposed three-time levels compact scheme under Merton jump-diffusion model: (a) As a function of stock price, (b) As a function of stock price and time.}
		\label{fig:mertonc}
	\end{center}
\end{figure}
\begin{table}[h!]
	\begin{tabular}{ m{4cm} | m{2.3cm} | m{2.3cm}| m{2.3cm} }
		\hline
		& S=90 & S=100 & S=110 \\
		\hline
		Reference values & 9.430457 & 2.731259 & 0.552363 \\
		\hline
		Proposed compact scheme & 9.430448 & 2.731252 & 0.552361 \\
		\hline
	\end{tabular}
	\caption{Values of European put options using proposed three-time levels compact scheme under Kou jump-diffusion model with $N=1536$ for different stock prices.}
	\label{table:kou}
\end{table}
\begin{figure}[h!]
	\begin{center}
		\subfigure[]{%
			\includegraphics[scale=0.420]{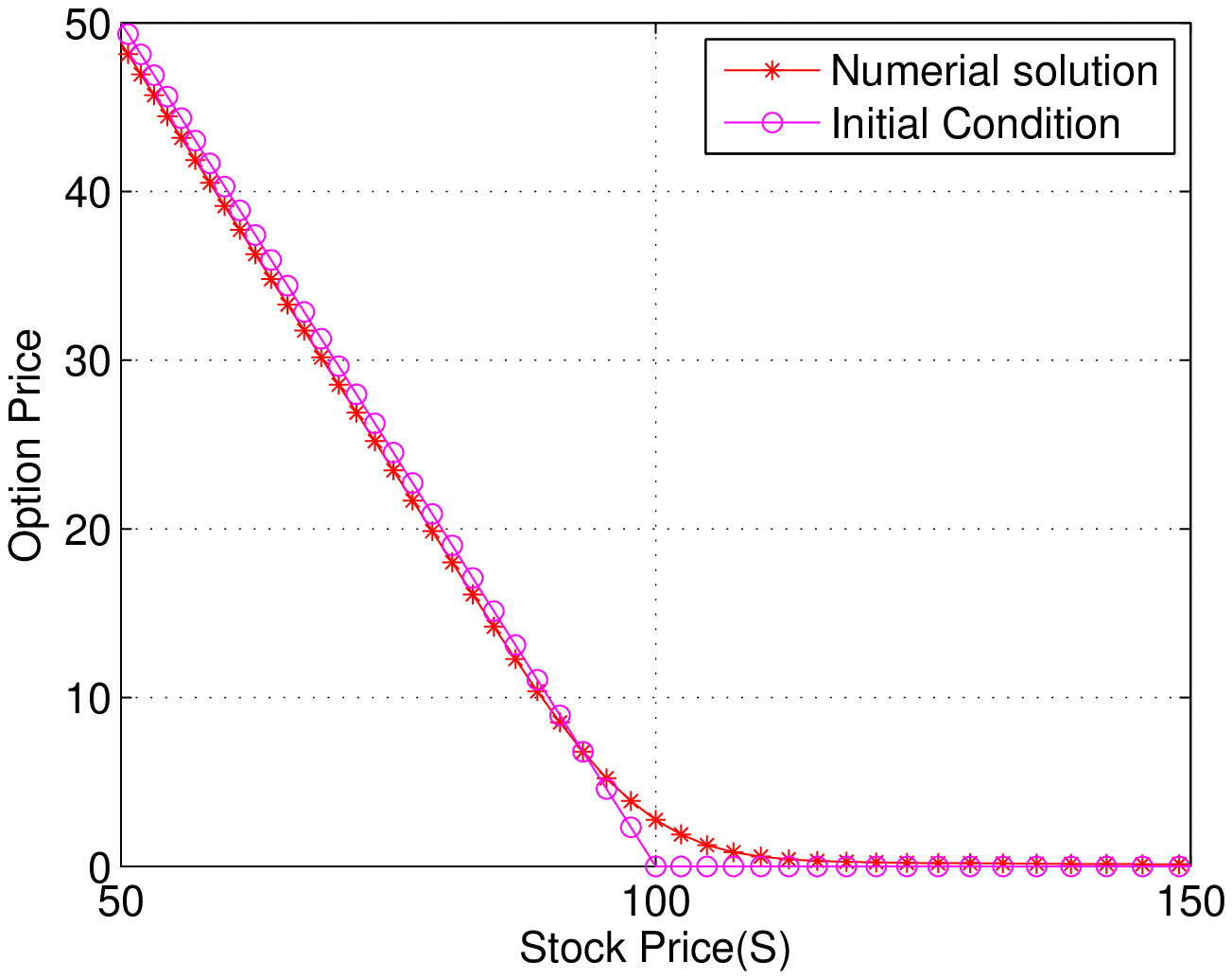}
			\label{fig:price_kou}}%
		\subfigure[]{%
			\includegraphics[scale=0.420]{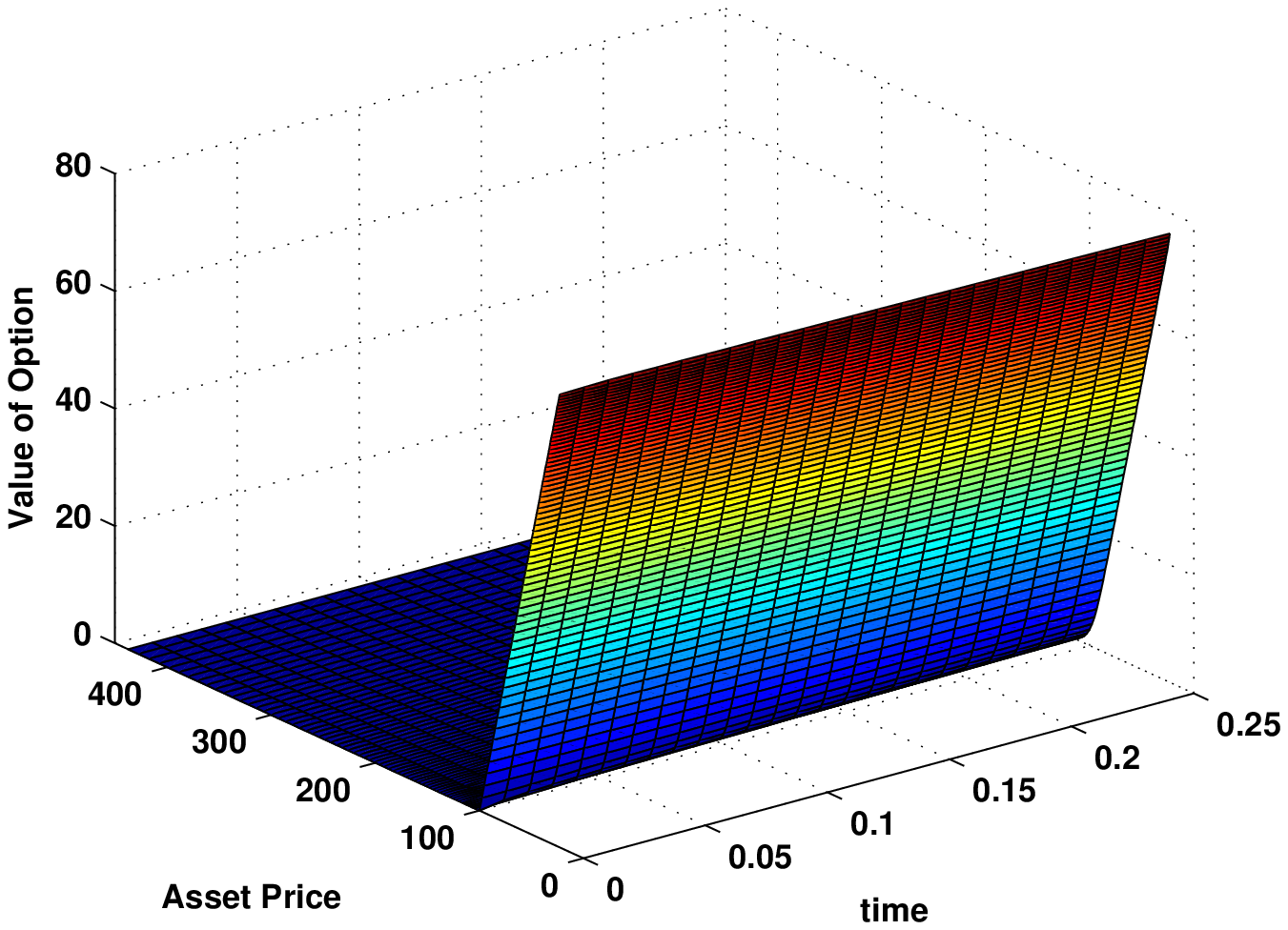}
			\label{fig:kou_price3d}}
		\caption{Values of European put options using proposed three-time levels compact scheme under Kou jump-diffusion model: (a) As a function of stock price, (b) As a function of stock price and time.}
		\label{fig:kou}
	\end{center}
\end{figure}
\begin{figure}[h!]
	\centering
	\includegraphics[width=8 cm]{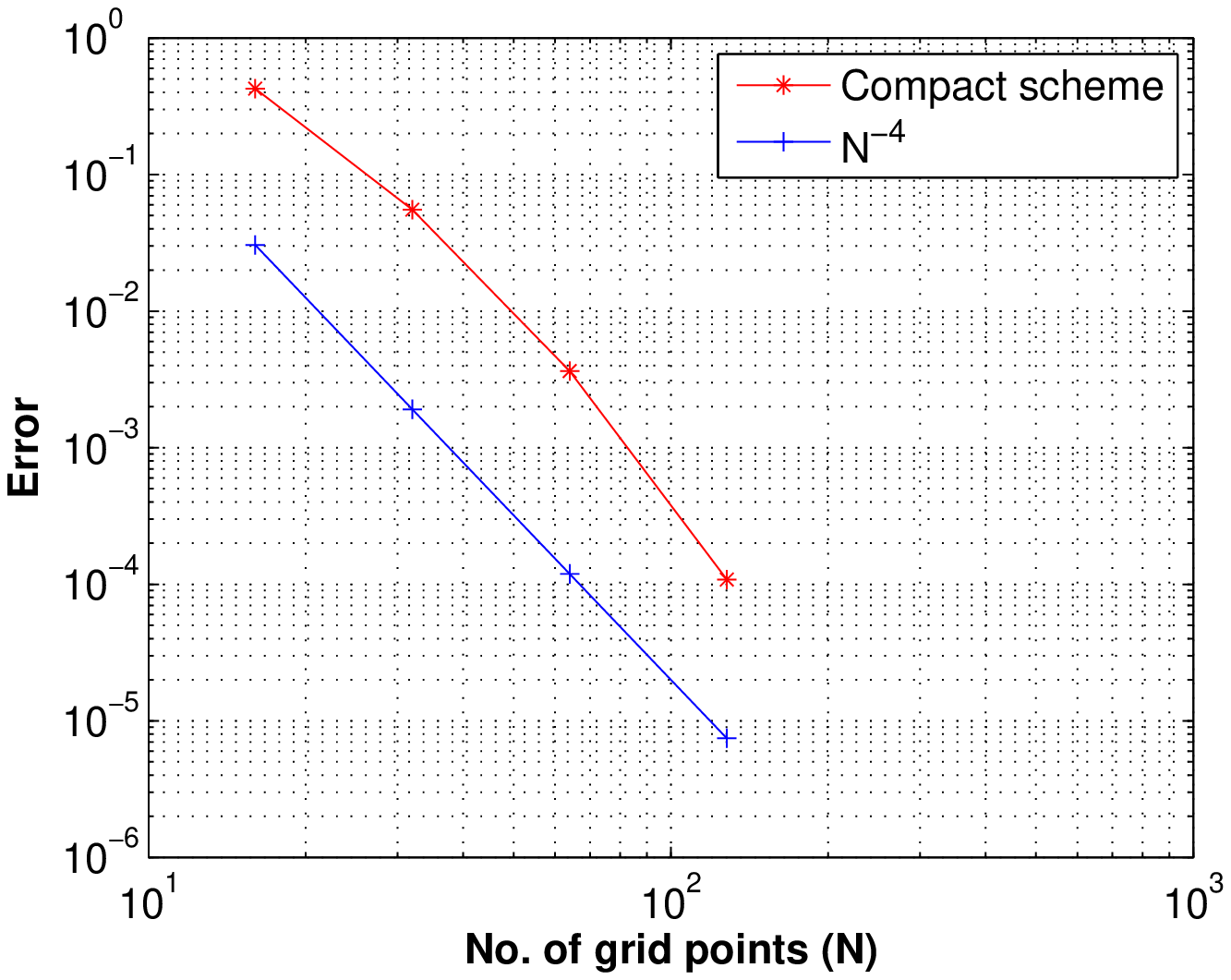}
	\caption{Rate of convergence: Error between the numerical solutions versus number of grid points for European put option under Kou jump-diffusion model.}
	\label{fig:roc_koup}
\end{figure}
\begin{table}[h!]
	\begin{tabular}{ m{4cm} | m{2.3cm} | m{2.3cm}| m{2.3cm} }
		\hline
		& S=90 & S=100 & S=110 \\
		\hline
		Reference values & 0.672677 & 3.973479 & 11.794583 \\
		\hline
		Proposed compact scheme & 0.672672 & 3.973476 & 11.794584 \\
		\hline
	\end{tabular}
	\caption{Values of European call options using proposed three-time levels compact scheme under Kou jump-diffusion model with $N=1536$ for different stock prices.}
	\label{table:kouc}
\end{table}
\begin{figure}[h!]
	\begin{center}
		\subfigure[]{%
			\includegraphics[scale=0.420]{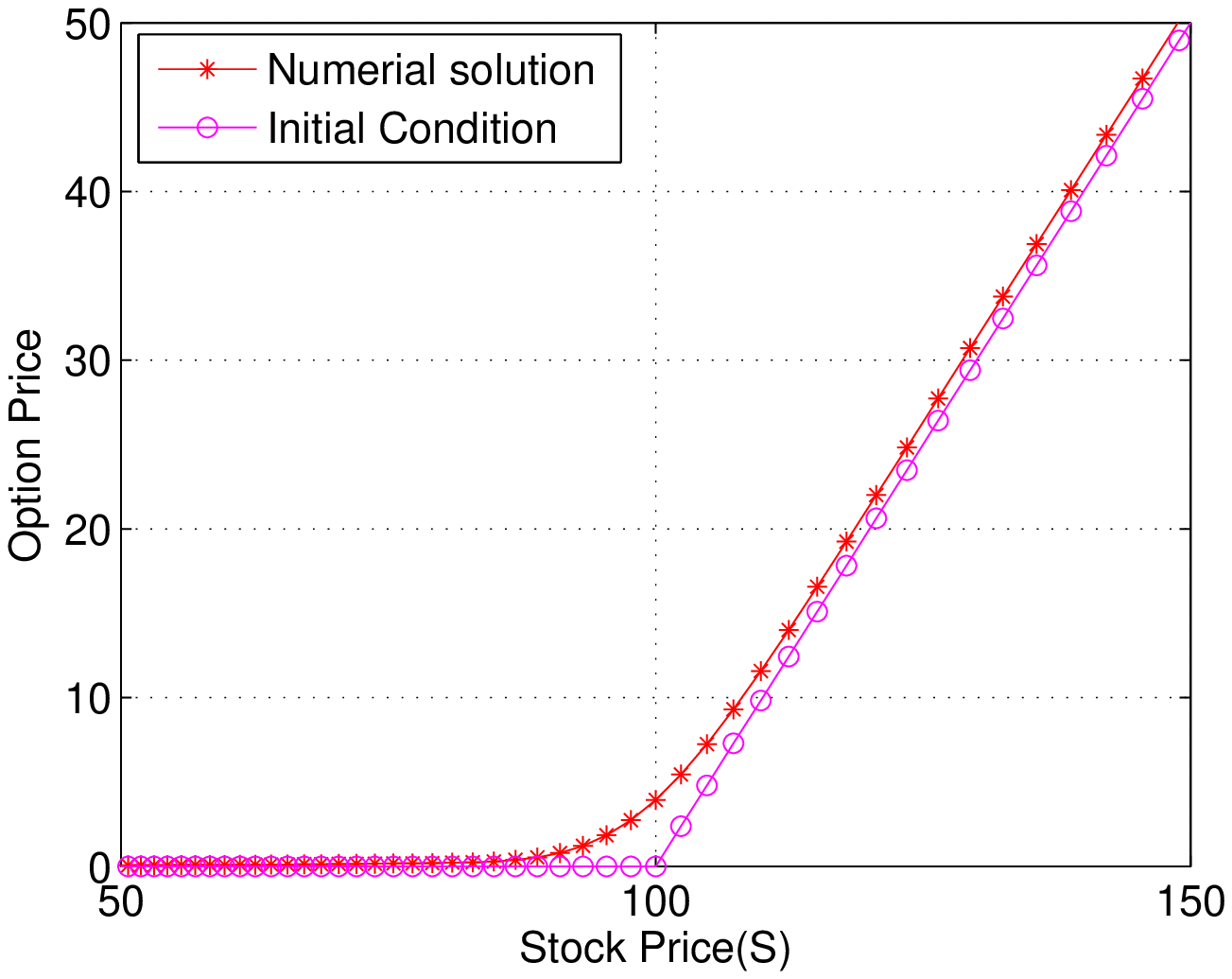}
			\label{fig:price_kouc}}%
		\subfigure[]{%
			\includegraphics[scale=0.420]{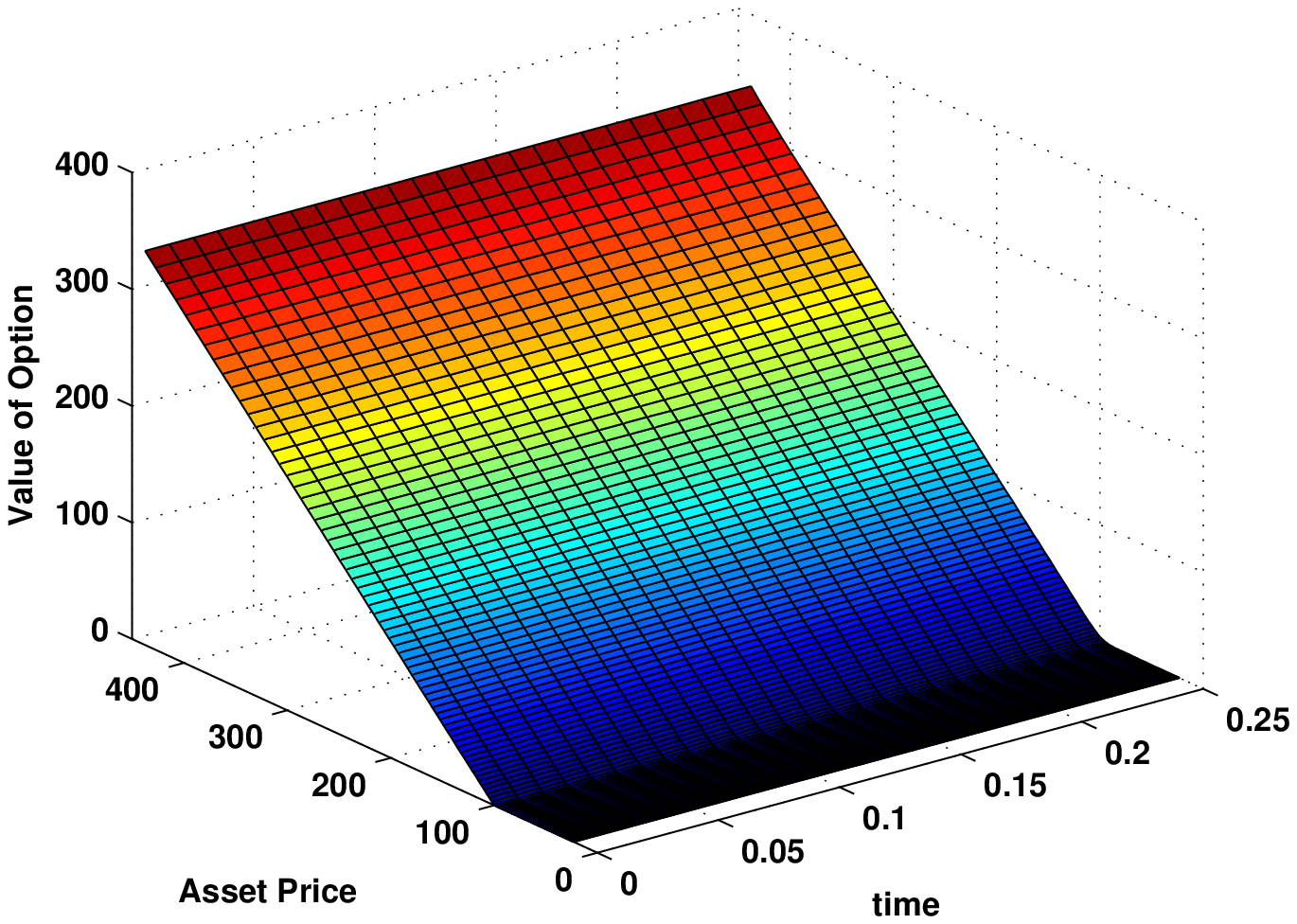}
			\label{fig:kouc_price3d}}
		\caption{Values of European call options using proposed three-time levels compact scheme under Kou jump-diffusion model: (a) As a function of stock price, (b) As a function of stock price and time.}
		\label{fig:kouc}
	\end{center}
\end{figure}
\begin{figure}[h!]
	\centering
	\includegraphics[width=8 cm]{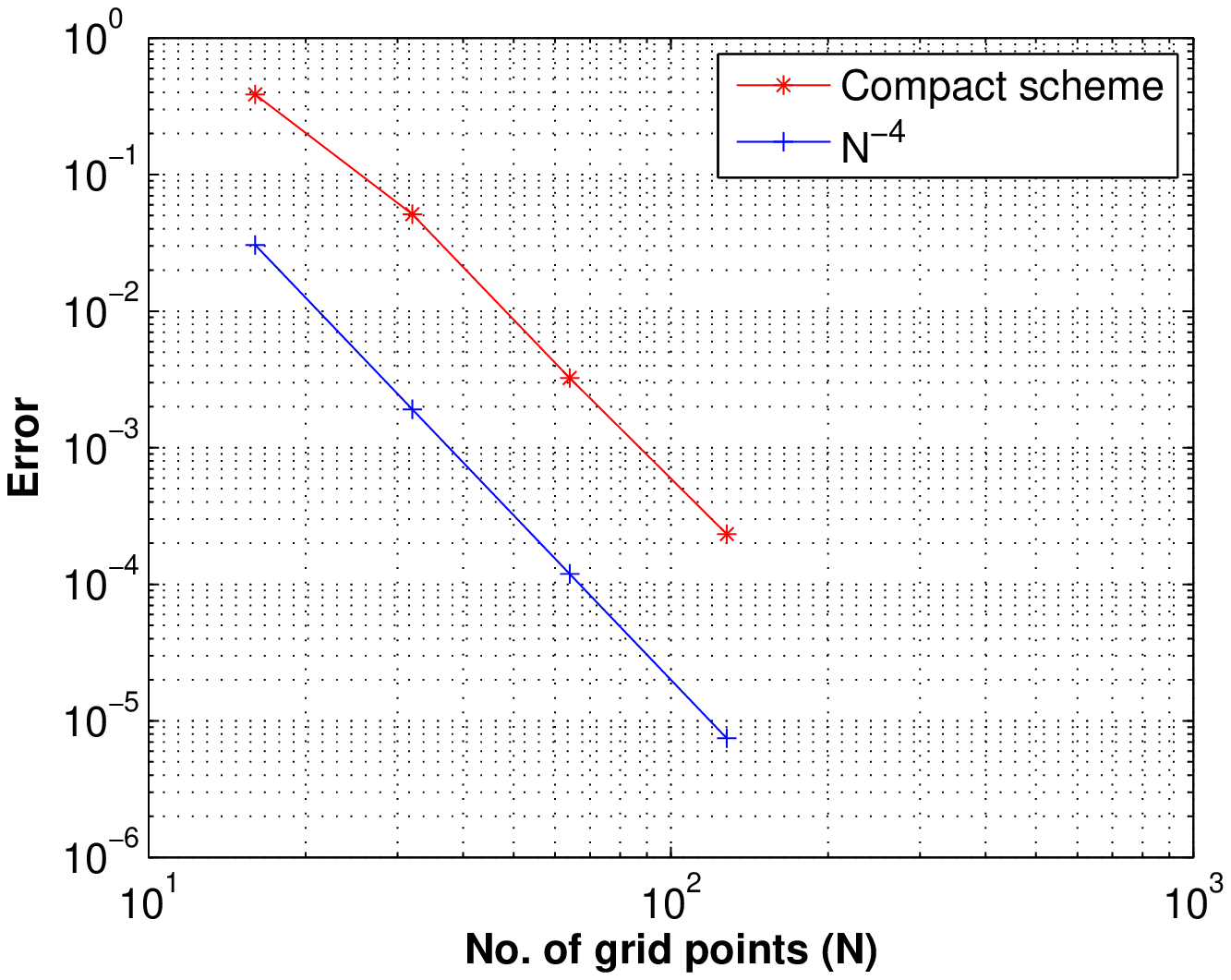}
	\caption{Rate of convergence: Error between the numerical solutions versus number of grid points for European call option under Kou jump-diffusion model.}
	\label{fig:roc_kouc}
\end{figure}
\begin{example}(Merton jump-diffusion model for European put options)
\end{example}
\par The reference values in Table~\ref{table:merton} for European put options under Merton jump-diffusion model are obtained from the infinite series given in \cite{Mer76}. It can be observed from Table~\ref{table:merton} that option prices obtained from the proposed compact scheme for different stock prices are in excellent agreement with the reference values. The initial condition and the numerical solution of the PIDE for European put option under Merton jump-diffusion model are presented in Figure~\ref{fig:price_merton}. Moreover, the values of option as a function of stock price and time are plotted in Figure~\ref{fig:merton_price3d}. The $\ell^2$ error $(Error=\|U(\delta x,\delta \tau)-U(\delta x/2,\delta \tau/2)\|_{\ell^2})$ versus number of grid points is presented in Figure~\ref{fig:roc_mertonp} and it is observed that proposed compact scheme exhibit fourth-order convergence rate. 
\par The analytic solution for European put options under Merton jump-diffusion model is obtained from \cite{Mer76}. The error between the analytic and numerical solution with non-smooth initial condition is presented as a function of time and stock prices in Figures~\ref{fig:err_woutsmo_finite} and~\ref{fig:err_woutsmo_compact}. It is observed from the Figures~\ref{fig:err_woutsmo_finite} and~\ref{fig:err_woutsmo_compact} that proposed compact scheme provides lesser error as compared to finite difference scheme. Similarly, the error between analytic and numerical solution after applying the smoothing operator to the initial condition is plotted as a function of time and stock prices in Figures~\ref{fig:err_smo_finite} and~\ref{fig:err_smo_compact}. It is shown in Figures~\ref{fig:err_smo_finite} and~\ref{fig:err_smo_compact}that less oscillations are produced in the solution near the strike price with the proposed compact scheme.\\
\textbf{Efficiency of proposed compact finite difference scheme}:\\
In order to compare the efficiency of the proposed compact scheme with finite difference scheme, the PIDE~(\ref{eq:pidediscre}) is also solved using finite difference scheme. The error between numerical and analytic solutions in $\ell^2$ norm versus CPU time is presented in Figure~\ref{fig:compu}. It is observed that proposed compact scheme is significantly efficient as compared to the finite difference scheme for a given accuracy.
\begin{example}(Merton jump-diffusion model for European call options)
\end{example}
\par The option prices obtained from the proposed compact scheme and the reference values from \cite{Halluin05} are given in Table~\ref{table:mertonc}. It can be observed from Table~\ref{table:mertonc} that proposed compact scheme is accurate for pricing European call options. The error $(Error=\|U(\delta x,\delta \tau)-U(\delta x/2,\delta \tau/2)\|_{\ell^2})$ versus number of grid points is plotted in Figure~\ref{fig:roc_mertonc} and fourth-order convergence of the proposed compact scheme is shown. The initial condition and numerical solution is presented in Figure~\ref{fig:price_mertonc}. Moreover, the option prices as a function of stock price and time is shown in Figure~\ref{fig:mertonc_price3d}.
\begin{example}(Kou jump-diffusion model for European put options)
\end{example}
\par The option prices obtained from the proposed compact scheme and reference values from \cite{Kwon11} for different stock prices are presented in Table~\ref{table:kou}. It can be observed from the Table~\ref{table:kou} that option prices obtained from the proposed compact scheme are similar to the reference values. The initial condition and numerical solution for European put options is presented in Figure~\ref{fig:price_kou}. The value of option as a function of stock price and time is plotted in Figure~\ref{fig:kou_price3d}. Moreover, fourth-order convergence rate of the proposed compact scheme is observed from Figure~\ref{fig:roc_koup}.
\begin{example}(Kou jump-diffusion model for European call options)
\end{example}
\par The option prices obtained from the proposed compact scheme and reference values from \cite{Halluin05} are presented in Table~\ref{table:kouc}. It can be observed from the  Table~\ref{table:kouc} that option prices obtained from the proposed scheme are in excellent agreement with the reference values. The initial condition and numerical solution for European call option is plotted in Figure~\ref{fig:price_kouc}. The values of option as a function of stock price and time are presented in Figure~\ref{fig:kouc_price3d}. Moreover, it can be observed from Figure~\ref{fig:roc_kouc} that proposed compact scheme exhibit fourth-order convergence rate.
\section*{Acknowledgement} The authors gratefully acknowledge the comments/suggestions of the referee which have greatly improved the paper.
The authors also acknowledges the support provided by Department of Science and Technology, India, under the grant number $SB/FTP/MS-021/2014$.
\addcontentsline{toc}{section}{References}
\bibliographystyle{unsrt}
\bibliography{references}
\end{document}